\def\ket#1{\ensuremath{|#1\rangle}}
\def\bra#1{\ensuremath{\langle #1|}}
\def\ip#1#2{\ensuremath{\langle #1|#2 \rangle}}
\def\qf#1#2#3{\ensuremath{\langle #1|#2|#3 \rangle}}
\def\vec#1{\ensuremath{\mathbf{#1}}}
\def\norm#1{\ensuremath{\|#1\|}}
\def\new#1{#1}
\newcommand{\up}{\ensuremath{\uparrow}}
\newcommand{\dn}{\ensuremath{\downarrow}}
\newcommand{\Sgn}{\ensuremath{\mathrm{sgn}}}
\newcommand{\sinc}{\ensuremath{\mathrm{sinc}}}
\newcommand{\Ne}{\ensuremath{N_e}} % number of eigenspaces
\newcommand{\Hp}{\ensuremath{\widetilde{H}}}
\newcommand{\Up}{\ensuremath{\widetilde{U}}}
\newcommand{\Pip}{\ensuremath{\widetilde{\Pi}}}
\newcommand{\lambdap}{\ensuremath{\widetilde{\lambda}}}
\newcommand{\omegap}{\ensuremath{\widetilde{\omega}}}
\newcommand{\Np}{\ensuremath{\widetilde{\Ne}}}
\newcommand{\K}{\ensuremath{\mathcal{K}}}
\newcommand{\In}{\ensuremath{\mathrm{in}}}
\newcommand{\Out}{\ensuremath{\mathrm{out}}}
\newcommand{\diag}{\mbox{diag}}
\renewcommand{\v}{\mathbf{v}}
\newtheorem{example}{Example}
\newtheorem{corollary}{Corollary}
\newtheorem{theorem}{Theorem}
\title{\LARGE \bf Design of Feedback Control Laws for Information Transfer in Spintronics Networks}
\author{Sophie G Schirmer$^*$,~\IEEEmembership{Member,~IEEE,} Edmond Jonckheere$^\dag$,~\IEEEmembership{Fellow,~IEEE,} Frank C Langbein$^*$,~\IEEEmembership{Member,~IEEE}%
\thanks{$^*$Supported by the Welsh Government and Higher Education Funding Council for Wales through the S\^{e}r Cymru National Research Network in Advanced Engineering and Materials (NRN082).}%
\thanks{$^\dag$Supported by ARO MURI.}%
\thanks{SGS is with the College of Science (Physics), Swansea University, Swansea, SA2 8PP, UK, \texttt{sgs29@swan.ac.uk}.}%
\thanks{EJ is with the Dept.\ of Electrical Engineering, Univ.\ of Southern California, Los Angeles, CA 90089, USA, \texttt{jonckhee@usc.edu}.}%
\thanks{FCL is with the School of Computer Science \& Informatics, Cardiff University, Cardiff, CF24 3AA, UK, \texttt{LangbeinFC@cardiff.ac.uk}.}%
}
\begin{document}

\maketitle

\begin{abstract}
Information encoded in networks of stationary, interacting spin-1/2 particles is central for many applications ranging from quantum spintronics to quantum information processing. Without control, however, information transfer through such networks is generally inefficient. \new{Currently available control methods to maximize the transfer fidelities and speeds mainly rely on dynamic control using time-varying fields and often assume instantaneous readout.  We present an alternative approach to achieving} efficient, high-fidelity transfer of excitations by shaping the energy landscape via the design of time-invariant feedback control laws without recourse to dynamic control.  \new{Both instantaneous readout and the more realistic case of finite readout windows are considered.  The technique can also be used to freeze information by designing energy landscapes that achieve Anderson localization.} Perfect state or super-optimal transfer and localization are enabled by conditions on the eigenstructure of the system and signature properties for the eigenvectors.  Given the eigenstructure enabled by super-optimality, it is shown that feedback controllers that achieve perfect state transfer are, surprisingly, also the most robust with regard to uncertainties in the system and control parameters.
\end{abstract}

\section{Introduction: Spintronics Devices}

\IEEEPARstart{E}{ncoding} information in spin degrees of freedom has the potential to revolutionize information technology through the development of novel devices utilizing electron spin. Information encoded in spin degrees of freedom can be transferred via spin-polarized currents. Information stored in spin states can also propagate through a network of coupled spins without charge transport, mediated directly by quantum-mechanical interactions. This is of particular interest as devices that do not rely on charge transport are not limited by heat dissipation due to resistance---potentially enabling higher component densities and greater energy efficiency~\cite{spintronics2,spintronics1}.

The realization of novel spintronic devices presents many technological challenges in device design and fabrication. Utilizing information encoded in spin degrees of freedom especially requires efficient, controlled on-chip transfer of excitations in spin networks. In quantum mechanical language, this transfer or transport of an excitation from one site to another requires steering the system from one quantum state to another, a problem akin to the well known unit step response of linear Single Degree of Freedom (SDoF) tracking controllers---with the significant difference of the presence of a global phase factor in the tracking error. As propagation of spin-based information is fundamentally governed by quantum-mechanics and the Schr\"odinger equation, however, excitations in a spin network propagate, disperse and refocus in a wave-like manner. Controlling information transport in such networks is thus a highly non-classical control problem. Previous work has shown that natural transmission of information does occur, but without active control the propagation of spin-based information in such networks can be slow and inefficient~\cite{rings_QINP}.

In this paper we consider how we can optimize transport in terms of transfer efficiency, speed and robustness using control. This requires an approach quite different from modern robust control, where time-domain specifications are substituted for conventional singular value Bode plots. The need for state-selective transfer makes the architecture depart from the SDoF configuration and precludes control designs that ensure \emph{asymptotic stability} of the target state. Instead, we rely on the concept of \emph{Anderson localization}~\cite{Anderson-58,50_years}, which is utilized to hold the system at or around the desired target state for future use.

We explore how information transfer or localization in spin networks can be controlled simply by shaping the energy landscape of the system. We show how the latter problem can be viewed in terms of feedback control laws, and that feedback control designs that achieve the best performance w.r.t. transfer fidelity also achieve the best robustness. This is unlike the traditional limitations observed for SDoF classical control and demonstrates the advantages of two degrees-of-freedom controllers~\cite{2_deg_freedom_controller,Robust_2DOF_MIMO} and is the setup adopted here. The deeper message of this paper is that quantum transport presents many challenges and opportunities for control and a rich source of new problems and paradigms relating to the foundation of classical control theory.

In Section~\ref{sec:theory} relevant theory of quantum spin networks and control paradigms are reviewed. The control objectives, conditions for perfect state transfer and speed limits for excitation transfer are discussed in Section~\ref{sec:design}, followed by eigenstructure analysis of the dynamic generators and signature properties for the eigenvectors to establish general conditions for optimality in Section~\ref{sec:optimality}. In Section~\ref{sec:sensitivity} the sensitivity of the design to uncertainty in the dynamical generators of the system is analyzed, and the result of vanishing sensitivity for superoptimal controllers is proven. Numerical optimization and sensitivity results are presented in Section~\ref{sec:optimization}.  We conclude with a discussion of classical vs quantum robust control in Section~\ref{sec:robust} and general conclusions and directions for future work in Section~\ref{sec:conclusion}.

\section{\label{sec:theory}Theory and Definitions}

\subsection{Networks of Coupled Spins}

Let $X$, $Y$ and $Z$ be the Pauli spin operators
\begin{equation}
  X = \begin{pmatrix} 0 &  1  \\  1 &  0 \end{pmatrix}, \quad
  Y = \begin{pmatrix} 0 & -i  \\  i &  0 \end{pmatrix}, \quad
  Z = \begin{pmatrix} 1 &  0  \\  0 & -1 \end{pmatrix},
\end{equation}
and let $X_k$ ($Y_k$, $Z_k$) be a tensor product of $N$ operators, all of which are the identity $I$, except for a single $X$ ($Y$, $Z$) operator in the $k$th position. With this notation, the Hamiltonian of a system of $N$ spin-$\tfrac{1}{2}$ particles with onsite potentials $D_k$ and two-body interactions between pairs of spins $(k,\ell)$ is
\begin{equation}\label{eq:H}
  H_{\rm full} =  \sum_{k=1}^N D_k Z_k + \sum_{\ell \neq k} J_{k\ell} (X_k X_\ell + Y_k Y_\ell + \kappa Z_k Z_\ell),
\end{equation}
where $J_{k\ell}=J_{\ell{}k}$ for all $k,\ell$ due to the symmetry of the interaction. The constants $D_k$ and $J_{k\ell}$ are measured in units of frequency. $\kappa$ is a parameter that depends on the coupling type: isotropic Heisenberg coupling ($\kappa=1$) or XX coupling ($\kappa=0$). The coupling constants $J_{k\ell}$ are determined by the topology of the network. For a chain with nearest-neighbor coupling we have $J_{k\ell}=0$ unless $k=\ell\pm1$ and similarly for a ring, except that $J_{N,1}=J_{1,N} \neq 0$.  A chain can be thought of as a type of quantum wire and a ring as a basic routing element to distribute information encoded in the network, e.g., via chains attached to nodes of the ring.  A network is \emph{uniform} or \emph{homogeneous} if all non-zero couplings have a fixed strength $J$.  \new{Spin networks of this type are widely applicable to modeling nuclear spin systems, electron spins in quantum dots and pseudo-spin systems consisting of trapped ions or atoms and even superconducting qubits. 
Systems coming very close to reproducing the ideal  dynamics of a one-dimensional Heisenberg chain have been realized~\cite{spintronics2,Heisenberg_Copper,  2016_Heisenberg_simulator, trapped_ions_Heisenberg_simulator_journal, 1D_Heisenberg_spin_chain_simulation}.}

Using the Dirac notation, a (pure) state $\ket{\Psi}$ of a system of $N$ spin-$\tfrac{1}{2}$ particles is a linear combination of the product states of the single spin eigenstates, which are eigenstates of the $Z$ operator denoted by $\ket{\up}$, $\ket{\dn}$:
\begin{equation}
   Z \ket{\up} = +\ket{\up}, \quad Z \ket{\dn} = -\ket{\dn}.
\end{equation}
The operator $Z_k$ applied to a product state $\ket{\dn\cdots\up\cdots\dn}$ thus returns $+1$ if the $k$th spin is $\ket{\up}$, and $-1$ if it is $\ket{\dn}$. Hence, $S=\tfrac{1}{2}\sum_{k=1}^N (I+Z_k)$ effectively counts the number of spins that are in the excited state $\ket{\up}$. The Hamiltonian~\eqref{eq:H} commutes with the total excitation operator, $[H_{\rm full}, S] = 0$. As commuting operators are simultaneously diagonalizable, it can easily be shown that the Hilbert space of the system decomposes into excitation subspaces~\cite{TAC2012} that are invariant under the dynamics. If we assume that only a single excitation (or bit of information) propagates through the network at any given time, then the Hamiltonian can be reduced to the single excitation subspace Hamiltonian
\begin{equation}\label{eq:HS}
  H_{\rm single} = \sum_{k=1}^N (D_k+\kappa J_{k}) \ket{k}\bra{k} + \sum_{\ell > k} J_{k\ell} (\ket{k}\bra{\ell} +\ket{\ell}\bra{k}),
\end{equation}
where the $J_k$ form the diagonal for the single excitation subspace of $\sum_{k\neq \ell} J_{k\ell}Z_kZ_\ell$, which can be absorbed into the $D_k$. $\ket{k}$ can be thought of as a column vector with zero entries except for a $1$ in the $k$th position, $\bra{\ell}$ can be thought of as a row vector with zero entries except for a $1$ in the $\ell$th position and $\ket{k}\bra{\ell}$ can be thought of as a matrix that is zero except for a $1$ in the $(k,\ell)$ position. $\ket{k}$ denotes a single excitation state with the excitation localized at the $k$th spin.

The Hamiltonian $H=H_{\rm single}$ of the system determines the time evolution of pure states $\ket{\Psi_0}$ via $\ket{\Psi(t)}=U(t)\ket{\Psi_0}$, where $U(t)$ is a one-parameter group of unitary operators governed by the Schr\"odinger equation 
\begin{equation}\label{e:SE}
  i\hbar \tfrac{d}{dt}U(t) = H U(t), \quad U(0)=I,
\end{equation}
where $I$ is the identity operator and $\hbar$ is the reduced Planck constant \new{(see, e.g.,~\cite[Eq. (1)]{survey_by_Rabitz}).}
By choosing time in units of $J^{-1}$ and energy in units of $J \times 1.05457173 \times 10^{-34} \joule \cdot \second$, we get $\hbar=J=1$ and can drop $\hbar$ in the following.

\subsection{Actuators for Spin Networks \& Control Paradigms}

Formally, an actuator for a quantum system is a device that interacts with the system, thereby altering its Hamiltonian---replacing $H$ by $H_S + H_C$. $H_S$ is the original system Hamiltonian, describing the intrinsic dynamics of the network, such as Eq.~\eqref{eq:HS} for the single excitation subspace. $H_C$ is a perturbation to the system Hamiltonian induced by the actuators, which can be constant or time-dependent. In the usual dynamic control framework for quantum systems, $H_C$ consists of one or more fixed interaction Hamiltonians $\mathsf{H}_{\mathsf{m}}$ with interaction strengths $\mathsf{u}_{\mathsf{m}}(t)$ that can be dynamically varied 
as 
\begin{equation}\label{eq:HC-dyn}
  H_C = \sum_{\mathsf{m}=1}^\mathsf{M} \mathsf{u}_{\mathsf{m}}(t) \mathsf{H}_{\mathsf{m}}.
\end{equation}
This results in a bilinear control problem for the controls $\mathsf{u}_{\mathsf{m}}(t)$.   A considerable amount of work on quantum control has focused on this paradigm  of time-dependent bilinear control~\cite{Dong_Petersen_survey, DAlessandro2003}.  This has proven to be a powerful tool and has been applied to controlling spin networks by dynamically varying all or some of the couplings $J_{k\ell}$ or potentials $D_k$~\cite{PRA2009}.

Usually finding suitable controls $\mathsf{u}_{\mathsf{m}}(t)$ is regarded as an open-loop control problem, but it can also be formulated in terms of finding a \emph{Feedback Control Law} (FCL),
\begin{equation}\label{eq:fb1}
  \mathsf{u}_{\mathsf{m}}(t) = \mathsf{u}_{\mathsf{m}} (U(t)).
\end{equation}
It is worth noting the differences between a FCL as defined above and Measurement-based Feedback Control (MFC) or Coherent Feedback Control (CFC) for quantum systems. FCLs such as Eq.\eqref{eq:fb1} are sometimes referred to as model-based feedback as the feedback is dependent on the evolution operator $U(t)$ of the system, which cannot be measured directly. Moreover, any measurement to obtain information about the evolution or current state of the system has a backaction that disturbs the system and thus acts as a co-actuator. In MFC, the state of the system is therefore usually replaced by an estimated state, which represents our state of knowledge about the system. It is obtained by state estimation based on continuous weak measurements. Incorporation of the measurement backaction and the probabilistic nature of quantum measurements further leads to stochastic differential equations and non-unitary evolution. CFC is another paradigm for quantum feedback based on coherent interaction between system and controller. This implicitly assumes that both, the system to be controlled and the controller, are quantum systems.   See~\cite{Altafini2012,quantum_meas_book,quantum_theory_methods} for good introductions to quantum control from a control engineering perspective.

All of these control paradigms play important roles in quantum control and are necessary to solve different problems~\cite{Schirmer2008}. MFC, for instance, is an important tool for deterministic state reduction and initial state preparation~\cite{Handel2005}. CFC can be used to stabilize quantum networks against noise and external perturbations~\cite{Zhang2012}. Dynamic open-loop control has found many applications from the preparation of quantum states of special interest, such as entangled states, and implementation of quantum gates for quantum information processing, to the control of spin dynamics in nuclear magnetic resonance (NMR), electron spin resonance (ESR), magnetic resonance imaging (MRI), and electronic, vibrational and rotational states of atoms and molecules~\cite{Glaser2015}. All of these paradigms, however, also have limitations and drawbacks. Dynamic control, for example, requires the ability to temporally modulate interactions, often at significant speed and time resolution. Besides, for networks with a high degree of symmetry such as rings with uniform coupling, controllability is often limited by dynamic symmetries, which imposes restrictions on what can be achieved, especially with local actuators~\cite{TAC2012}.

\begin{figure}
\centering\includegraphics[width=\columnwidth]{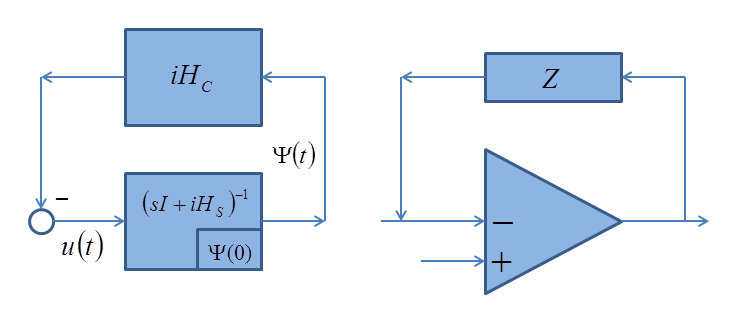}
\caption{Schematic of direct feedback loop for a quantum network (left) and conventional operational amplifier (right)}
\label{fig:schematic}
\end{figure}

Here we focus on the paradigm of finding constant interaction strengths as an alternative to dynamic control.  Specifically, we wish to design simple FCL's, 
\begin{equation}\label{eq:lfs0}
\vec{u}(U(t)) = -iH_C U(t)
\end{equation}
with time-invariant $H_C$, giving rise to a linear control system
\begin{equation}\label{eq:lfs}
  \tfrac{d}{dt}U(t) = (-iH_S) U(t) + I\vec{u}(U(t)),
\end{equation}
where $I$ is the identity matrix.  $H_S$ and $U(t)$ are complex operators but we could transform the system into a real system. \new{Decomposing the Schr\"odinger equation~\eqref{e:SE} as~\eqref{eq:lfs0}-\eqref{eq:lfs} and interpreting the control term as a feedback control law is not only conceptually but practically useful as it brings control insights to the problem.   
The addition of the control term creates a no-measurement ``direct feedback loop,'' 
a concept reminiscent of the seminal work of Bode~\cite{Bode}, where even though feedback exists no measurements are needed.   
Fig.~\ref{fig:schematic} attempts to illustrate the quantum control-feedback amplifier metaphor. }
%Direct feedback loops are common in electronic circuits where, e.g., connecting the output of an operational amplifier to its negative input via a resistor is used to stabilize its gain, and feedback loops involving capacitors and diodes enable the amplifier to perform a multitude of tasks such as addition, multiplication, differentiation and integration of signals robustly.  Just as there is no need to measure the output voltage of an operational amplifier for the feedback to work, there is no need to measure the unitary propagator $U(t)$.}

\new{Dynamic control problems have been formulated in terms of model-based feedback and techniques such as Lyapunov control have been successfully applied to these problems, e.g.,~\cite{Wang2010,Altafini2007}, and even dynamic open-loop control schemes can be reformulated as time-varying FCLs.  However, our aim here is to find constant FCLs for certain tasks, while at the same time restricting the Hamiltonian to have a simple form.  Restricting the control of a bilinear system such as Eqs.~\eqref{e:SE}-\eqref{eq:HC-dyn} to be time-invariant reduces the design 
to a linear, but unconventional, control design~\cite{bilinear_constant_input}.} 

\section{\label{sec:design}Design of Optimal Feedback Control Laws for Excitation Transport}

\subsection{\label{s:control_tasks}Control Objectives}

Our main control objective is to transfer an initial state $\ket{\In}=\ket{m}$, corresponding to the initial excitation of the system on spin $m$, to a desired target state $\ket{\Out}=\ket{n}$, corresponding to the excitation on spin $n$, for any given pair $(m,n)$ of initial and target spins. Mathematically, we formulate the problem of arbitrary state transfer (not limited to single excitation states) as finding an \emph{input-output map} given by a unitary operator $U(T)$ that maximizes the (squared) \emph{fidelity} or \emph{probability of successful transfer} from $\ket{\In}$ to $\ket{\Out}$ in an amount of time $T$:
\begin{equation}\label{e:fidelity_T}
  p(\ket{\Out}\shortleftarrow \ket{\In},T) = |\bra{\Out}U(T)\ket{\In}|^2 \leq 1.
\end{equation}

In practice, readout of information is generally not instantaneous but takes place over a finite time window. In this case it is more advantageous to maximize the \emph{average transfer fidelity} for a given readout time window $2\delta T$,
\begin{equation}\label{e:fidelity_ave}
  \bar{p}(\ket{\Out}\shortleftarrow \ket{\In},T;\delta T) = \frac{1}{2\delta T}\int_{T-\delta T}^{T+\delta T} |\bra{\Out} U(t) \ket{\In}|^2 \;dt.
\end{equation}
Setting $\ket{\Out}=\ket{\In}$ and choosing a large readout time window $2\delta T$ we can suppress transport from time $0$ to $T$ and \emph{localize} or \emph{freeze excitations} at a particular node for later use by maximizing $\bar{p}(\ket{\In}\shortleftarrow\ket{\In},\tfrac{1}{2}T;\tfrac{1}{2}T)$.

Unitarity of $U(T)$ ensures selectivity of the transfer as $\norm{U(T) (\ket{\In}-\ket{\In'})} = \norm{\ket{\In}-\ket{\In'}}$, i.e., if $U(T)$ maps the input state to the target state then no other state can be mapped to the target state.  Quantitatively, an initial preparation error maps to a terminal error of the same magnitude  as that of the initial error.   The flipside of this selectivity requirement is that we cannot hope to engineer a process that renders the target state asymptotically stable but can only expect Lyapunov stability or Anderson localization~\cite{Anderson-58,50_years}.

\begin{figure}
\centering\includegraphics[width=.69\columnwidth]{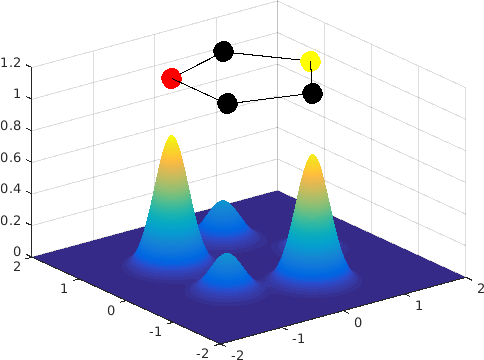}
\caption{Spin ring with energy landscape created by localized potentials.}
\label{fig:landscape}
\end{figure}

We are interested in control of information transfer by \emph{shaping} the potential \emph{energy landscape} of the system (see Fig.~\ref{fig:landscape}).  \new{The extent to which the energy landscape can be controlled in an actual device is subject to constraints, the precise nature of which depends on the physical realization.  However, there is generally some freedom to shape the energy landscape.  For example, there are proposals for semiconductor architectures consisting of quantum dots with surface gates that control the energy levels via the Stark shift.  In other architectures, magnetic fields (Zeeman shift) can be used to locally or globally control the energy landscape.  In atom traps, control of the energy landscape can be achieved by deforming the optical lattice~\cite{2016_Heisenberg_simulator}.  As this paper is mainly concerned with the development of a theoretical framework, details of experimental realizations and constraints are beyond the scope of the current work and are left for future work.}

Controlling the energy landscape means we wish to find a FCL $\vec{u}(U(t)) =  -i \widehat{D} U(t)$ with $\widehat{D}=\diag(\vec{D})$ and $\vec{D}=(D_1,\dotsc,D_N)$ that maximizes the probability of information transfer given by Eq.~\eqref{e:fidelity_T} or~\eqref{e:fidelity_ave}. For a network with fixed topology defined by the couplings $J_{k\ell}$, this corresponds to applying local potentials $D_k$ that are constant in time, resulting in a constant Hamiltonian $H_\vec{D} = H_S + \widehat{D}$ and an input-output map $U_{\vec{D}}(t)$ that is the solution of the Schr\"odinger Eq.~\eqref{e:SE} with $H=H_\vec{D}$. The objective is to find a control parameter vector $\vec{D}^*$ that maximizes the instantaneous transfer fidelity
\begin{equation}\label{eq:opt0}
  p_{\vec{D}^*}(\ket{\Out}\shortleftarrow \ket{\In},T) = \max_{\vec{D}} p_{\vec{D}} (\ket{\Out}\shortleftarrow \ket{\In},T)
\end{equation}
or the average transfer fidelity
\begin{equation}\label{eq:opt1}
  \bar{p}_{\vec{D}^*}(\ket{\Out}\shortleftarrow \ket{\In},T;\delta T) = \max_{\vec{D}} \bar{p}(\ket{\Out}\shortleftarrow \ket{\In},T;\delta T)
\end{equation}
at some time $T$. We can fix $T$, require $T \le T_{\max}$ with an upper bound $T_{\max}$, or aim to achieve the transfer with maximum fidelity in minimum time. We also wish to consider the \emph{sensitivity} of the transfer with regard to uncertainties in system parameters such as coupling strengths $J_{k\ell}$ and local potentials $D_k$ as well as disturbances such as environmental noise.

For practical applications, it is often preferable to modify the objective slightly and aim to find a FCL that achieves a desired transfer in \emph{minimum time} with a certain \emph{margin of error}, as we do not necessarily require perfect state transfer but only that the final state be sufficiently close, up to a global phase factor, to the desired target state. Finding FCLs for information transfer in spin networks thus reduces to an optimization problem, which can be solved using standard optimization tools. However, the optimization landscape is very challenging, in particular when the goal is to find a control that achieves the highest possible fidelity in the shortest time possible, possibly subject to various other constraints.

\subsection{\label{s:speed_limit}Perfect State Transfer \& Speed Limits}

There are many open questions regarding the existence of FCLs that achieve \emph{perfect} state transfer, in \emph{finite} time or asymptotically, and the control resources required. \emph{Perfect state transfer} from state $\ket{\In}$ to state $\ket{\Out}$ at time $T$ requires realization of a $U(T)$ such that $|\bra{\Out} U(T) \ket{\In}| = 1$. For perfect state transfer we have $\ket{\Out} = e^{i\phi} U(T) \ket{\In}$ with a global phase factor $\phi$. Hence, if the fidelity reaches its upper bound, we need not have $\ket{\Out}=U(T)\ket{\In}$, but only $[\ket{\Out}]=[U(T)\ket{\In}]$, where $[.]$ denotes the equivalence class of a unit vector of $\mathbb{C}^N$ in the complex projective space $\mathbb{C}\mathbb{P}^{N-1}\cong S^{2N-1}/S^1$.

It is easy to see that perfect state transfer is \emph{always} possible between \emph{any} pair of states in time $T$ for any $T>0$ if there are no constraints on the control Hamiltonian $H_C$, as we can simply set $H_C = -H_S + \tfrac{\pi}{2T} i(\ket{n}\bra{m}-\ket{m}\bra{n})$. However, the existence of FCLs that achieve perfect state transfer when the actuators are constrained is not obvious. Furthermore, even if such FCLs exist, information transfer is usually subject to \emph{speed limits}. While it is nontrivial to derive speed limits for arbitrary quantum networks, we can derive lower bounds on the transfer time in certain cases, which can be used as performance indicators for the optimization.

If the distance between initial and target spin is $1$, we can reduce the network to a two-spin system with direct coupling by applying large biases to sites other than the input and output spins, yielding an effective two-spin Hamiltonian
\begin{equation}
  H_\vec{D} = \begin{pmatrix} D_1 & 1 \\ 1 & D_2 \end{pmatrix}.
\end{equation}
This system undergoes Rabi oscillations with the Rabi frequency $\Omega=\sqrt{(D_2-D_1)^2+4}$ and it can easily be shown that
\begin{equation}
  p(\ket{2}\shortleftarrow \ket{1},T)=\left(\tfrac{1}{2}\Omega\right)^{-2}\sin^2\left(\tfrac{1}{2}{\Omega}T\right).
\end{equation}
The maximum of $1$ is achieved for $T=\tfrac{\pi}{2}$, if and only if $\Omega=2$, or $D_1=D_2$.

Similarly, if the distance between input and output spin is $2$, the network can be reduced to a three-spin chain by quenching it and assuming zero-bias on the three remaining spins. In this case we can easily show that
\begin{equation}
  p(\ket{3}\shortleftarrow \ket{1},T)=\sin^4\left(\tfrac{1}{2}\sqrt{2} T\right).
\end{equation}
Here we have perfect state transfer for $T=\tfrac{\pi}{2}\sqrt{2}$.

More generally, we can derive speed limits by quenching rings to chains from the eigenstructure symmetries. If the distance between input and output spin in a ring with $N$ spins with uniform nearest neighbor couplings is $n-1$ and the biases satisfy $D_{n+1-k}=D_k$, then $H_\vec{D}$ commutes with the permutation $\sigma=[n,n-1,\dotsc,1]$ with corresponding permutation matrix $P=P^\dag$, i.e. $PH_\vec{D}P = H_\vec{D}$. Let $V \Lambda V^\dag$ be an eigendecomposition of $H_{\vec{D}}$ with eigenvectors $\v_k$ and eigenvalues $\lambda_k$. Then $V \Lambda V^\dag = PV \Lambda V^\dag P$ or $V=PV$, i.e., the first and last row of $V$ are the same, and
the \emph{tracking error} $\|\ket{n}-e^{i\phi}U(T)\ket{1}\|$ with global phase factor $\phi$ becomes
\begin{equation*}
  \sum_k \left|(\v_{k})_1\right|^2 \left|1 - e^{-i(t\lambda_k-\phi)}\right|^2
     = 4 \sum_k \left|(\v_{k})_1\right|^2  \sin^2\left(\tfrac{1}{2}(t \lambda_k-\phi)\right).
\end{equation*}
(see Eq.~\eqref{eq:err} derived in Section~\ref{sec:optimality}). This expression vanishes if $t\lambda_k -\phi$ is an integer multiple of $2\pi$. For a chain of length three with no bias, $\lambda_1=-\lambda_3=\sqrt{2}$ and $\lambda_2=0$, and we achieve perfect state transfer for $T=2\pi/\lambda_1=\tfrac{1}{2}\sqrt{2}\pi$, setting $\phi=0$.

\section{\label{sec:optimality}Eigenstructure and Symmetry}

The observations about the role of symmetries and the Hamiltonian eigenstructure motivate a careful analysis of the role the latter play in the design of FCLs for information transfer in spin networks.

\subsection{Eigenstructure}

Consider the eigendecomposition of the Hamiltonian
\begin{equation}
  H_S+H_C=H_{\vec{D}} = \sum_{k=1}^{\Ne} \lambda_k \Pi_k,
\end{equation}
where $\Pi_k$ is the projector on the $k$th eigenspace associated with the eigenvalue $\lambda_k$ and $\Ne$ is the number of distinct eigenspaces. The eigenvalues $\lambda_k$ are real as the Hamiltonian is Hermitian. Furthermore, as in our case $H_\vec{D}$ is a real symmetric matrix, the projectors $\Pi_k$ are also real symmetric. The associated input-output map is
$U_\vec{D}(T) = \sum_{k=1}^{\Ne} e^{-iT\lambda_k} \Pi_k$. For the objective of maximizing the transfer fidelity at time $T$, we have
\begin{equation}
  \begin{split}
    \sqrt{p_\vec{\vec{D}}(\ket{n}\shortleftarrow \ket{m},T)}
      & = \left| \sum_{k=1}^{\Ne} e^{-iT\lambda_k} \bra{n}\Pi_k\ket{m} \right|\\
      & = \left| \sum_{k \in \K} e^{-i(T\lambda_k-\phi)} \bra{n}\Pi_k\ket{m} \right| \\
      & \le \sum_{k\in\K} \left| e^{-i(T\lambda_k-\phi)} \bra{n}\Pi_k\ket{m} \right| \\
      & = \sum_{k\in\K} \left|\bra{n}\Pi_k\ket{m} \right|,
  \end{split}\label{eq:p:b}
\end{equation}
where $\K$ is the subset of the eigenspaces that have non-zero overlap with the input and output state, $\bra{n}\Pi_k\ket{m} \neq 0$, and $\phi$ is a global phase factor that does not affect the norm. This means the maximum is achieved if (but not only if)
\begin{itemize}
\item[(i)] the phases of the exponentials cancel the phases of the projections $\bra{n}\Pi_k\ket{m}$, up to a \emph{global} phase factor $e^{i\phi}$ that is absorbed by the absolute value, and
\item[(ii)] $\sum_{k\in\K}|\bra{n}\Pi_k\ket{m}|$ is maximized simultaneously as the phase assignment.
\end{itemize}
The transfer is perfect if the upper bound $\sum_{k \in \K}|\bra{n}\Pi_k\ket{m}|=1$ is attained, in which case we call the controller \emph{superoptimal}.

To prove that the preceding conditions are not only sufficient but necessary, we observe the following:
\begin{align}
  \|\ket{n}-e^{i \phi}U_\vec{D}(T)\ket{m}\|^2
    &= 2-2\operatorname{Re}\left(\bra{n}e^{i \phi}U_\vec{D}(T)\ket{m}\right)\\
    &= 2\left(1-\sum_{k \in \K} \bra{n}\Pi_k\ket{m}\cos(T\lambda_k-\phi)\right). \notag
\end{align}
This yields
\begin{theorem}\label{t:necessary_and_sufficient}
Necessary and sufficient conditions for superoptimality are
\begin{enumerate}
\item[(i)] the eigenprojections of $H$ satisfy $\sum_k |\bra{n}\Pi_k\ket{m}|=1$;
\item[(ii)] the eigenvalues are such that the $T\lambda_k-\phi$'s are even or odd multiples of $\pi$ depending on whether the $\bra{n}\Pi_k\ket{m}$'s are positive or negative, resp.
\end{enumerate}
\end{theorem}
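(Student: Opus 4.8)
The plan is to leverage the tracking-error identity displayed immediately before the theorem, turning it into a tight chain of inequalities so that conditions (i) and (ii) emerge as exactly the requirements for equality. First I would observe that superoptimality --- perfect state transfer, $|\bra{n}U_\vec{D}(T)\ket{m}|=1$ --- is equivalent to the existence of a global phase $\phi$ with $\|\ket{n}-e^{i\phi}U_\vec{D}(T)\ket{m}\|=0$, since two unit vectors of $\C^N$ determine the same point of $\C\mathbb{P}^{N-1}$ iff one is a unit-modulus multiple of the other. By the displayed identity this is in turn equivalent to $\sum_{k\in\K}\bra{n}\Pi_k\ket{m}\cos(T\lambda_k-\phi)=1$. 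Here each $\bra{n}\Pi_k\ket{m}$ is a real number because the $\Pi_k$ are real symmetric projectors, so the ``positive or negative'' dichotomy in (ii) is meaningful.

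Next I would record the a priori bound $\sum_{k\in\K}|\bra{n}\Pi_k\ket{m}|\le 1$: writing $\Pi_k=\Pi_k^2$ with $\Pi_k$ Hermitian and applying Cauchy--Schwarz gives $|\bra{n}\Pi_k\ket{m}|\le\sqrt{\bra{n}\Pi_k\ket{n}}\sqrt{\bra{m}\Pi_k\ket{m}}$, and a second Cauchy--Schwarz over $k$ together with completeness $\sum_k\Pi_k=I$ yields the bound. Combining this with the previous paragraph,
\[
  1=\sum_{k\in\K}\bra{n}\Pi_k\ket{m}\cos(T\lambda_k-\phi)\le\sum_{k\in\K}|\bra{n}\Pi_k\ket{m}|\,|\cos(T\lambda_k-\phi)|\le\sum_{k\in\K}|\bra{n}\Pi_k\ket{m}|\le 1 .
\]
For necessity I would force equality all along this chain: equality in the last step is exactly (i); equality in the middle step forces $|\cos(T\lambda_k-\phi)|=1$, i.e.\ $T\lambda_k-\phi\in\pi\mathbb{Z}$, for every $k\in\K$; and equality in the first step then forces $\cos(T\lambda_k-\phi)=\Sgn\bra{n}\Pi_k\ket{m}$, so $T\lambda_k-\phi$ is an even multiple of $\pi$ when $\bra{n}\Pi_k\ket{m}>0$ and an odd multiple when it is negative --- condition (ii). For sufficiency I would simply substitute (ii) back into the sum: each term becomes $\bra{n}\Pi_k\ket{m}\,\Sgn\bra{n}\Pi_k\ket{m}=|\bra{n}\Pi_k\ket{m}|$, so the total equals $1$ by (i), the tracking error vanishes, and the controller is superoptimal.

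I do not anticipate a genuine obstacle; the one point that calls for care is that the bound $\sum_k|\bra{n}\Pi_k\ket{m}|\le 1$ --- which the theorem's phrasing presents as a hypothesis --- in fact always holds, so in the necessity direction one must argue that the inequality chain is genuinely tight rather than merely invoking (i). An equivalent route that sidesteps this bound is to apply each eigenprojection $\Pi_j$ to the identity $\ket{n}=e^{i\phi}U_\vec{D}(T)\ket{m}$, obtaining $\Pi_j\ket{n}=e^{i(\phi-T\lambda_j)}\Pi_j\ket{m}$; for $j\notin\K$ the vectors $\Pi_j\ket{n}$ and $\Pi_j\ket{m}$ are simultaneously parallel (by this relation) and orthogonal (since $\bra{n}\Pi_j\ket{m}=0$), hence both zero, while for $j\in\K$ taking the inner product with $\ket{n}$ gives $\bra{n}\Pi_j\ket{n}=|\bra{n}\Pi_j\ket{m}|$ with $T\lambda_j-\phi\in\pi\mathbb{Z}$; summing over $j$ and using $\sum_j\Pi_j=I$ recovers (i) and (ii) directly.
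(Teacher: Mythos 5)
Your argument is correct and follows essentially the same route as the paper: the tracking-error identity $\|\ket{n}-e^{i\phi}U_\vec{D}(T)\ket{m}\|^2=2\bigl(1-\sum_{k\in\K}\bra{n}\Pi_k\ket{m}\cos(T\lambda_k-\phi)\bigr)$ combined with tightness of the triangle-inequality chain, which the paper leaves implicit after ``This yields.'' Your explicit double Cauchy--Schwarz justification of $\sum_k|\bra{n}\Pi_k\ket{m}|\le 1$ and the eigenprojection-based alternative are useful elaborations but not a different method.
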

\begin{corollary} \label{c:nonuniformness}
For any $\widehat{D}$-controller and any $\ket{m} \ne \ket{n}$ it is impossible for all $\bra{n}\Pi_k\ket{m}$, $k\in \K$, to have the same sign.
\end{corollary}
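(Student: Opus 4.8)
\emph{Proof strategy.} The plan is to combine the completeness of the eigenprojections with the fact that the $(n,m)$ entry of the identity vanishes whenever $m\ne n$. Since $H_\vec{D}$ is a real symmetric matrix, each $\Pi_k$ is real symmetric, so $\bra{n}\Pi_k\ket{m}$ is simply the real number sitting in the $(n,m)$ position of $\Pi_k$; in particular ``having the same sign'' is a meaningful assertion, and we may argue by contradiction.

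First I would note that the distinct eigenspaces of $H_\vec{D}$ are mutually orthogonal and together span $\C^N$, so that $\sum_{k=1}^{\Ne}\Pi_k=I$. Taking the $(n,m)$ matrix element of this resolution of the identity and using $\ip{n}{m}=0$ gives
\begin{equation}\label{eq:cor_sum_zero}
  \sum_{k=1}^{\Ne}\bra{n}\Pi_k\ket{m}=0.
\end{equation}
By the very definition of $\K$, the summands with $k\notin\K$ vanish identically, so \eqref{eq:cor_sum_zero} reduces to $\sum_{k\in\K}\bra{n}\Pi_k\ket{m}=0$, an equality in which \emph{every} term is nonzero. Now suppose, for contradiction, that the $\bra{n}\Pi_k\ket{m}$, $k\in\K$, all had the same sign; after multiplying through by $-1$ if necessary they would all be strictly positive, forcing $\sum_{k\in\K}\bra{n}\Pi_k\ket{m}>0$ and contradicting the vanishing just established. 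Hence they cannot all share the same sign, which is the claim.

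I do not expect any substantive obstacle: once one observes that the $\Pi_k$ are real and resolve the identity, the statement is essentially a one-line consequence. The only point requiring a moment's care is the degenerate case $\K=\emptyset$, in which ``all $\bra{n}\Pi_k\ket{m}$ have the same sign'' is vacuously true; this case occurs precisely when $\bra{n}U_\vec{D}(t)\ket{m}\equiv 0$, i.e.\ transfer between $m$ and $n$ is dynamically forbidden, and is the only situation excluded from the statement.
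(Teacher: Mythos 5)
Your proof is correct and follows essentially the same route as the paper: both rest on the resolution of identity $\sum_k \Pi_k = I$ and the orthogonality $\ip{n}{m}=0$ to conclude that $\sum_{k\in\K}\bra{n}\Pi_k\ket{m}=0$, which immediately rules out a common sign. You merely make explicit the final contradiction step and the vacuous case $\K=\emptyset$, which the paper leaves implicit.
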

\begin{proof}
  As $\{\Pi_k\}$ is a resolution of identity and  $\ket{m}\perp\ket{n}$,
  \begin{equation}\label{eq:zerosum}
    \sum_{k\in\K} \bra{n}\Pi_k\ket{m} = \bra{n} \sum_{k=1}^{N_e}\Pi_k\ket{m}= \ip{n}{m}=0.
  \end{equation}
\end{proof}

In the special case of $\K$ containing two elements, e.g., $\K=\{1,2\}$, Eq.~\eqref{eq:zerosum} yields $\bra{n}\Pi_1\ket{m}=-\bra{n}\Pi_2\ket{m}$ for any controller $\widehat{D}$ with the remaining states being dark, i.e., $\bra{n}\Pi_k\ket{m}=0$, $k\not\in\K$. The resulting freedom could be used to secure the phase condition along with $|\bra{n}\Pi_1\ket{m}|=|\bra{n}\Pi_2\ket{m}|=\tfrac{1}{2}$,
which yields $p_\vec{D}(\ket{n}\shortleftarrow \ket{m},T)=1$, i.e., perfect state transfer.

\subsection{Signature Property in the Case of Distinct Eigenvalues}

In the generic case when $H_\vec{D}$ has $N$ distinct eigenvalues, we have $\Pi_k = \ket{\vec{v}_k}\bra{\vec{v}_k}$, where $\{\v_k:k=1,\dotsc,N\}$ is the (real) orthonormal frame of eigenvectors of $H_\vec{D}$. Taking $(\v_{k})_m=\ip{\vec{v}_k}{m}$ and $(\v_{k})_n= \ip{\vec{v}_k}{n}$ to be the projections of the (real) input and output states onto the $k$th eigenvector of $H_\vec{D}$, the tracking error becomes
\begin{equation}\label{eq:err}
  \|\ket{n}-e^{i\phi}U_\vec{D}(T)\ket{m}\|^2 = \sum_k |(\v_{k})_n - e^{-i(t \lambda_k-\phi)} (\v_{k})_m|^2.
\end{equation}
It assumes its global minimum of $0$ if and only if
\begin{equation}
  \begin{array}{rl}
    |(\v_{k})_n| &= |(\v_{k})_m|, \\
    s_{kn}        &= \Sgn\left(e^{-i(t\lambda_k-\phi)}\right)s_{km},
  \end{array}
  \qquad \forall k = 1, \dotsc, N,
\end{equation}
where $s_{kn}:=\Sgn((\v_{k})_n)$ and $e^{-i(T\lambda_k-\phi)}$ is real at optimality. Noting $i e^{\pm i \pi/2} = \pm 1$, the previous condition is equivalent to
\begin{equation}\label{e:optimal_times}
  \begin{array}{rl}
    |(\v_{k})_n|           &= |(\v_{k})_m|, \\
    t\lambda_k - \phi &= \frac{\pi}{2}(s_{kn}-s_{km})  \!\!\! \mod 2\pi,
  \end{array} \qquad
  \forall k =1, \dotsc, N.
\end{equation}
Setting $s_k:=\Sgn((\v_{k})_n(\v_{k})_m)=\Sgn(\bra{n}\Pi_k\ket{m})$, we get
\begin{equation}\label{e:signature}
  (\v_{k})_n = s_k(\v_{k})_m.
\end{equation}
Even though only the $m$th and $n$th components of the eigenvectors matter in perfect state transfer, the signature property extends to other components related by symmetry.

\subsection{Symmetries \& Full Signature Property of Eigenvectors}\label{s:full_signature}

The key to finding good feedback control laws by optimization lies in understanding the \emph{symmetries} of the system and using the biases to enforce or annul certain symmetries. For this we constrain the controls to ensure that the first condition, $|(\v_{k})_n|=|(\v_{k})_m|$ is satisfied for all $k\in\K$ and for all admissible controls. Let $H_{\vec{D}}=V \Lambda V^\dag$ be an eigendecomposition of $H_\vec{D}$. If there is a unitary operator $R$ that commutes with $H_\vec{D}$, $RH_\vec{D}= H_\vec{D}R$, then $H_\vec{D}  = R H_\vec{D} R^\dag = R V \Lambda V^\dag R^\dag = R V \Lambda (RV)^\dag$ implies that if $\vec{v}_k$ is a unit eigenvector with eigenvalue $\lambda_k$ then so is $R\vec{v}_k$. If the eigenvalues $\lambda_k$ of $H_\vec{D}$ are distinct, then both vectors can only differ by a phase, $R\vec{v}_k=e^{i\phi}\vec{v}_k$; in particular
\begin{equation}
  |\ip{n}{\vec{v}_k}| = |\ip{n}{R\vec{v}_k}| = |\ip{R^\dag n }{\vec{v}_k}|,  \qquad \forall k\in \K.
\end{equation}
Hence, we need to find a unitary operator $R$ that commutes with $H_\vec{D}$ and satisfies $R^\dag\ket{n}=\ket{m}$.

\begin{example}
For a chain of length $N$ with uniform coupling we have inversion symmetry, i.e., the system Hamiltonian $H_S$ commutes with the permutation operator $P$, $H_SP=PH_S$, where $P\ket{n}=\ket{N+1-n}$ for $n=1,\dotsc,N$. If the control Hamiltonian $H_C=\widehat{D}$ also commutes with $P$, then $|\left(\v_k\right)_m|=|\left(\v_k\right)_n|$ for all $k$ whenever the input and output node satisfy $m=N+1-n$.
\end{example}
\begin{example}
For a ring of $N$ spins with uniform coupling, we have translation invariance in addition to inversion symmetry. Therefore, we can always choose biases such that $|\left(\v_k\right)_m|=|\left(\v_k\right)_n|$ for all $k$.
\end{example}

For a ring with uniform coupling we can show that Eq.~\eqref{e:signature} not only holds for the $(m,n)$ input-output components, but also for those components related by the permutation $\sigma(m+j)=n-j$.   Motivated by the cyclic symmetry of the ring requiring modulo $N$ operations, we relabel the indices of the spins, starting at $0$ rather than $1$ and the indices are taken modulo $N$ without indicating this explicitly to keep the notation simple. By convention, the labeling of the vertices is clockwise around the ring.

\begin{theorem}
For a ring of $N$ spins with uniform coupling between adjacent spins only, the eigenvectors are signature symmetric as
\begin{equation}\label{e:v-symmetry}
  \left(\v_k\right)_{m+\ell} = s_k \left(\v_k \right)_{n-\ell}, \quad  m+\ell, n-\ell \in\{0,1,\dotsc,N-1\}
\end{equation}
under the symmetry of the biases
\begin{equation}
   \label{e:bias-symmetry}
  D_{m+\ell}=D_{n-\ell}, \quad  m+\ell, n-\ell \in\{0,1,\dotsc,N-1\}.
\end{equation}
Furthermore, if $|m-n|$ is even then $(\v_k)_{ m+\ell=n-\ell}=0$.
\end{theorem}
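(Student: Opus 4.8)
The plan is to realize the bias symmetry~\eqref{e:bias-symmetry} as an orthogonal reflection commuting with $H_\vec{D}$, and to identify the signs $s_k$ with its eigenvalues on the $\v_k$. Relabelling the sites by $\mathbb{Z}_N$, let $\sigma$ be the involution $\sigma(j)=m+n-j \bmod N$, so that $\sigma(m+\ell)=n-\ell$ and $\sigma^2=\mathrm{id}$, and let $P$ be the permutation matrix defined by $P\ket{j}=\ket{\sigma(j)}$; then $P=P^\dagger=P^{-1}$. First I would verify that $P H_\vec{D} P = H_\vec{D}$ is precisely condition~\eqref{e:bias-symmetry}. Conjugating the uniform nearest-neighbour term $\sum_{j}(\ket{j}\bra{j+1}+\ket{j+1}\bra{j})$ by $P$ and using $\sigma(j+1)=\sigma(j)-1$ together with the bijectivity of $\sigma$ on $\mathbb{Z}_N$ returns the same circulant operator (a uniform ring is reflection invariant), whereas conjugating $\sum_j D_j\ket{j}\bra{j}$ gives $\sum_j D_{\sigma(j)}\ket{j}\bra{j}$, which equals the original iff $D_{\sigma(j)}=D_j$ for all $j$, i.e.\ $D_{m+\ell}=D_{n-\ell}$.

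Next, with $[P,H_\vec{D}]=0$ and a simple spectrum (the distinct-eigenvalue situation treated above), each real unit eigenvector $\v_k$ of $H_\vec{D}$ is also an eigenvector of $P$; since $P$ is a real orthogonal involution this forces $P\v_k=\epsilon_k\v_k$ with $\epsilon_k\in\{+1,-1\}$. Reading this off componentwise via $(P\v_k)_i=(\v_k)_{\sigma(i)}$ and setting $i=m+\ell$ yields $(\v_k)_{n-\ell}=\epsilon_k(\v_k)_{m+\ell}$, i.e.\ $(\v_k)_{m+\ell}=\epsilon_k(\v_k)_{n-\ell}$ for all admissible $\ell$, which is~\eqref{e:v-symmetry} once I show $\epsilon_k=s_k$. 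Taking $\ell=0$ gives $\bra{n}\Pi_k\ket{m}=(\v_k)_n(\v_k)_m=\epsilon_k(\v_k)_n^2$, so for $k\in\K$ indeed $\epsilon_k=\Sgn\bra{n}\Pi_k\ket{m}=s_k$, consistent with~\eqref{e:signature}. If $H_\vec{D}$ has repeated eigenvalues, I would instead pick a basis of $P$-eigenvectors inside each eigenspace---possible because $P$ restricts there to a real orthogonal involution---and repeat the argument.

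For the last assertion, $|m-n|$ even makes $\ell^\star=(n-m)/2$ an integer, so $j^\star:=m+\ell^\star=n-\ell^\star$ is the fixed point of $\sigma$ at the midpoint of the $m\to n$ arc. Putting $\ell=\ell^\star$ in~\eqref{e:v-symmetry} gives $(\v_k)_{j^\star}=s_k(\v_k)_{j^\star}$, which forces $(\v_k)_{j^\star}=0$ for every eigenvector with $s_k=-1$ (for $s_k=+1$ the relation is vacuous). The only places requiring care are the cyclic index bookkeeping in checking $PH_\vec{D}P=H_\vec{D}$ and, absent a simple spectrum, the choice of a $P$-adapted eigenbasis; neither involves any hard estimate, so the real content is setting up the reflection $\sigma$ correctly---everything else then falls out of the commuting-operators argument.
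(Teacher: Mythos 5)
Your proof is correct, but it takes a genuinely different route from the paper's. The appendix proof works directly with the three-term recursion of the eigenequation $x_{\ell-1}+(D_\ell-\lambda)x_\ell+x_{\ell+1}=0$: it pairs components symmetrically, shows that the sums $X_{\overline{m}-\ell}=x_{m+\ell}+x_{n-\ell}$ obey a closed polynomial recursion in $\lambda$, and concludes by backsubstitution from the boundary condition $x_m+x_n=0$ (with a case split on the parity of $n-m$ and a caveat that a certain polynomial $p_m(\lambda)$ be nonzero). You instead realize the bias symmetry as a reflection permutation $P$ commuting with $H_\vec{D}$ and invoke the simple-spectrum argument to force $P\v_k=\pm\v_k$; this is exactly the mechanism the paper itself sets up in Section~\ref{s:full_signature} (finding a unitary $R$ commuting with $H_\vec{D}$ with $R^\dag\ket{n}=\ket{m}$) but does not carry through in the appendix. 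Your route is cleaner in several respects: it treats both signs $s_k=\pm1$ uniformly (the appendix recursion as written only handles the antisymmetric boundary condition $x_m+x_n=0$), it dispenses with the $p_m(\lambda)\neq 0$ proviso, and the parity of $|m-n|$ enters only through the fixed point of $\sigma$ at the end. What the paper's recursion buys in exchange is an explicit, elementary transfer-matrix-style computation that exhibits the polynomials $p_{\overline{m}-\ell}(\lambda)$ and makes the antipodal structure for even $N$ visible. Two small points to note in your write-up: the identification $\epsilon_k=s_k$ via $\bra{n}\Pi_k\ket{m}=\epsilon_k(\v_k)_n^2$ only pins down the sign for $k\in\K$ (for dark eigenvectors $s_k$ is undefined and the statement is vacuous, as you observe), and the vanishing at the midpoint holds only for eigenvectors with $\epsilon_k=-1$ --- a restriction the theorem statement glosses over but which your argument (and, implicitly, the paper's, since it also starts from $x_m+x_n=0$) makes explicit.
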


The proof is given in Appendix~\ref{app:signature-property}.

\section{\label{sec:sensitivity}Sensitivity to Uncertainties}

\subsection{General Sensitivity}

We analyze the sensitivity of the (squared) fidelity or probability of successful transfer $\left|\bra{n}U(T)\ket{m}\right|^2$ relative to uncertainties in the couplings $J_{k\ell}$ or other parameters. Let
\begin{equation}\label{eq:Hpert}
  \Hp = H_S + H_C + \sum_{\mu} \delta_\mu S_\mu
\end{equation}
be the total Hamiltonian of the perturbed system, where $H_S$ is the ideal system Hamiltonian, $H_C$ is the control Hamiltonian, here assumed to be time-invariant, and the $\delta_\mu S_\mu$ for $\mu=1,2,\dotsc$ are perturbations. $S_\mu$ reflects the \emph{structure} of the perturbation and $\delta_\mu$ its amplitude. For uncertainty in the coupling $J_{k\ell}$, we take $S_\mu=\ket{k}\bra{\ell} + \ket{\ell}\bra{k}$. The transfer operator of the perturbed system is
\begin{equation}\label{eq:Upert}
  \Up(T) = e^{-i(H_S+H_C+\sum_\mu \delta_\mu S_\mu)T}.
\end{equation}
The design sensitivity is determined by the partial derivative
\begin{equation}\label{eq:partial}
  \frac{\partial |\bra{n}\Up(T)\ket{m}|^2}{\partial \delta_\mu}
    = 2 \operatorname{Re}\left( \bra{n}\frac{\partial \Up(T)}{\partial \delta_\mu}\ket{m}\bra{m}\Up^\dag(T)\ket{n}\right).
\end{equation}
As $H_S+H_C$ generally does not commute with the perturbation $S_\mu$, we use the general formula to evaluate the partial derivative
\begin{equation}
  \frac{\partial \Up(T)}{\partial \delta_\mu} = -i\int_0^1 e^{-i\Hp T(1-s)}(S_\mu T)e^{-i \Hp Ts} \; ds,
\end{equation}
which remains valid around $\delta_\mu \neq 0$ (see~\cite{neat_formula,krotov_neat_formula}). From the eigendecomposition of the perturbed Hamiltonian $\Hp$, $\Hp = \sum_{k=1}^{\Np} \lambdap_k \Pip_k$, where $\Pip_k$ is the projector onto the eigenspace associated with the eigenvalue $\lambdap_k$, it is readily found that
\begin{equation*}
  \bra{m}\Up(T)^\dag\ket{n}  = \sum_j \bra{m}\Pip_j \ket{n} e^{i T\lambdap_j}.
\end{equation*}
Evaluation of the integral gives
\begin{align}
  \frac{\partial \Up(T)}{\partial \delta_\mu}
    &= -iT \sum_{k,\ell} \Pip_k S_\mu \Pip_\ell \int_0^1e^{-i T\lambdap_k(1-s)}e^{-i T\lambdap_\ell s}\;ds\nonumber\\
    &= -iT \sum_{k,\ell} \Pip_k S_\mu \Pip_\ell \frac{e^{-i T\lambdap_\ell}-e^{-i T\lambdap_k}}{i T\left(\lambdap_k-\lambdap_\ell\right)}.\label{e:integral}
\end{align}
Inserting this into Eq.~(\ref{eq:partial}) gives
\begin{align}\label{e:old_sensitivity}
    & \frac{\partial |\bra{n}\Up(T)\ket{m}|^2}{\partial \delta_\mu} \nonumber\\
  = & \; 2T\sum_{j,k,\ell} \bra{m} \Pip_j \ket{n} \bra{n} \Pip_k S_\mu \Pip_\ell \ket{m}  \nonumber\\
    & \qquad \times \frac{\cos\left(T\left(\lambdap_k-\lambdap_j\right)\right)
        -\cos\left(T\left(\lambdap_\ell-\lambdap_j\right)\right)}{T\left(\lambdap_k-\lambdap_\ell\right)}\\
  = & -2T \sum_{j,k,\ell} \bra{m} \Pip_j \ket{n} \bra{n} \Pip_k S_\mu \Pip_\ell \ket{m} \nonumber\\
    & \qquad \times \frac{\sin\left(\tfrac{1}{2}T\left(\lambdap_k-\lambdap_\ell\right)\right)}{\tfrac{1}{2}T\left(\lambdap_k-\lambdap_\ell\right)}
  \sin\left(\tfrac{1}{2}T\left(\lambdap_k+\lambdap_\ell-2\lambdap_j\right)\right),\nonumber
\end{align}
where we used $\cos(a)-\cos(b) =-2\sin(\tfrac{1}{2}(a-b))\sin(\tfrac{1}{2}(a+b))$. Finally, defining $\omegap_{k\ell}=\lambdap_k-\lambdap_\ell$ and $\sin(x)/x =\sinc(x)$, gives
\begin{equation}\label{e:new_sensitivity}
  \begin{split}
  \frac{\partial |\bra{n}\Up(T)\ket{m}|^2}{\partial \delta_\mu}
    = & -2T\sum_{k,\ell}  \bra{n} \Pip_k S_\mu \Pip_\ell \ket{m} \sinc\left(\tfrac{1}{2}T\omegap_{k\ell}\right)\\
      & \times \sum_j \bra{m} \Pip_j \ket{n} \sin\left(\tfrac{1}{2}T(\omegap_{kj}+\omegap_{\ell j})\right).
  \end{split}
\end{equation}

\subsection{Sensitivity at $\delta_\mu=0$}

Up to now, the sensitivity could have been evaluated at any $\delta_\mu$. From here on, we restrict the discussion to $\delta_\mu=0$. One of the implications of Theorem~\ref{t:necessary_and_sufficient}, saying that at superoptimality $T \lambda_k$ is a multiple of $\pi$ modulo the global phase factor $\phi$, is that for $k=\ell$ the argument of the sine in Eq.~\eqref{e:new_sensitivity} is a multiple of $\pi$ (the global phase factors in $\tfrac{1}{2}T(\lambda_k+\lambda_\ell-2\lambda_j)$ cancel) and thus the sine vanishes. Therefore, the sum over $k,\ell$ in Eq.~\eqref{e:new_sensitivity} can be restricted to $k \ne \ell$. Next, observe that
\begin{equation*}
   \sin\left(\tfrac{1}{2}T(\lambda_k+\lambda_\ell-2\lambda_j)\right) = s_j\sin\left(\tfrac{1}{2}T(\lambda_k+\lambda_\ell)\right).
\end{equation*}
This allows us to isolate the sum over $j$, which takes the value $1$ at superoptimality:
\begin{equation*}
  \sum_j s_j \bra{m}\Pi_j\ket{m}=\sum_j |\bra{m}\Pi_j\ket{n}|=1.
  \end{equation*}
Finally, observe that $\sinc\left(\tfrac{1}{2}T(\lambda_k-\lambda_\ell)\right)$ vanishes when $T(\lambda_k-\lambda_\ell)$
is a multiple of $2\pi$. Putting everything together, the sensitivity formula becomes
\begin{align}
  \frac{\partial |\bra{n}\Up(T)\ket{m}|^2}{\partial \delta_\mu}
    =& -2T\sum_{s_ks_\ell=-1}  \bra{n} \Pip_k S_\mu \Pip_\ell \ket{m} \sinc\left(\tfrac{1}{2}T\omegap_{k\ell}\right) \notag \\
     &  \times \sin\left(\tfrac{1}{2}T(\lambda_k+\lambda_\ell)\right), \label{e:latest_sensitivity}
\end{align}
where $s_ks_\ell=-1$ indicates that the sum is restricted to those $k,\ell$ such that $T(\lambda_k-\lambda_\ell)$ is an odd multiple of $\pi$.

\begin{figure*}
  \includegraphics[width=\textwidth]{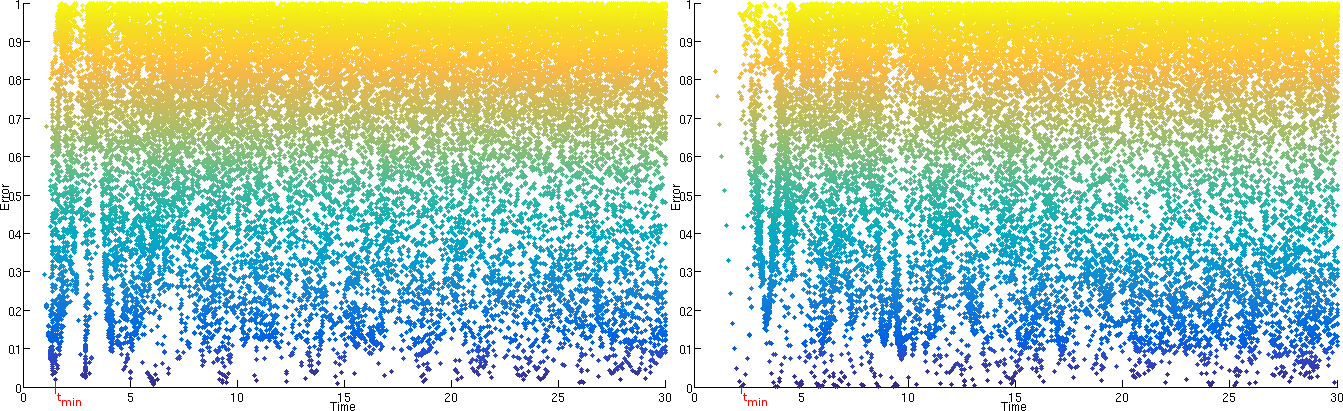}
  \caption{Results of optimizing the information propagation from spin $1$ to $2$ (left) and $1$ to $3$ (right) for an XX-ring of $9$ spins using L-BFGS optimization with exact gradients over spatial biases for fixed times $T$ from $1$ to $30$ with step size $0.2$. Each data point represents the infidelity $1-p(\ket{\Out}\shortleftarrow\ket{\In},T)$ achieved for the corresponding time by a single optimization run across different initial values with $100$ restarts for each time $T$. The optimization gets trapped often but it can still find good solutions for certain times. In particular, good solutions were found for the speed limit $t_{\mbox{min}}$ for both transfers.}\label{fig:ring9-errors}
\end{figure*}

\subsection{Vanishing Sensitivity to Symmetric Perturbations at Optimality}

We prove that the sensitivity of the fidelity relative to a real perturbation structured as $S_\mu=S_\mu^\dagger$ vanishes. This includes a perturbation of the $\mu\text{-}(\mu+1)$ coupling, in which case $S_\mu=\ket{\mu}\bra{\mu+1}+\ket{\mu+1}\bra{\mu}$, and the case where the perturbation is on the $D_\mu$ control bias, in which case $S_\mu=\ket{\mu}\bra{\mu}$. The vanishing of the latter sensitivity is quite trivial; indeed, if the controller is differentiably optimal, the first order conditions require that the directional derivative of the fidelity along any control direction, in particular $\ket{\mu}\bra{\mu}$, must vanish.

By a real Gram-Schmidt orthonormalization process, we write $\Pi_k = \ket{\v_k}\bra{\v_k}$, so that Eq.~\eqref{e:latest_sensitivity} can be rewritten,
\begin{align}\label{e:sensitivity_relabeled}
    \frac{\partial |\bra{n}\Up(T)\ket{m}|^2}{\partial \delta_\mu}
      =& -2T\sum_{s_ks_\ell=-1}  \ip{n}{\v_k}\ip{\v_\ell}{m} \qf{\v_k}{S_\mu}{\v_\ell}\\
       &  \times \sinc(\tfrac{1}{2}T(\lambda_k-\lambda_\ell)\sin(\tfrac{1}{2}T(\lambda_k+\lambda_\ell)). \notag
\end{align}
Observe that $\sinc(\tfrac{1}{2}T(\lambda_k-\lambda_\ell))\sin(\tfrac{1}{2}T(\lambda_k+\lambda_\ell))$ is symmetric relative to the indices $k,\ell$. That is,
\begin{eqnarray*}
  \lefteqn{\sinc(\tfrac{1}{2}T(\lambda_k-\lambda_\ell))\sin(\tfrac{1}{2}T(\lambda_k+\lambda_\ell))}\\
    &=&\sinc(\tfrac{1}{2}T(\lambda_\ell-\lambda_k))\sin(\tfrac{1}{2}T(\lambda_\ell+\lambda_k)).
\end{eqnarray*}
Likewise, $\qf{\v_k}{S_\mu}{\v_\ell}$ is symmetric relative to the indices $k,\ell$, $\qf{\v_k}{S_\mu}{\v_\ell}=\qf{\v_\ell}{S_\mu}{\v_k}$, because $S_\mu$ is a real symmetric matrix and the eigenvectors were taken to be real.

If on the right-hand side of Eq.~\eqref{e:sensitivity_relabeled} we add the same right-hand side with interchanged indices $k$ and $\ell$, we obtain twice the partial derivative of the squared fidelity relative to $\delta_\mu$. Thus,
\begin{equation}\label{e:sensitivity_final}
  \begin{split}
    2&\frac{\partial |\bra{n}\Up(T)\ket{m}|^2}{\partial \delta_\mu}\\
     &= -2T\sum_{s_ks_\ell=-1} \qf{\v_k}{S_\mu}{\v_\ell} (\ip{n}{\v_k}\ip{\v_\ell}{m}+\ip{n}{\v_\ell}\ip{\v_k}{m}) \\
     & \qquad \times \sinc(\tfrac{1}{2}T(\lambda_k-\lambda_\ell))\sin(\tfrac{1}{2}T(\lambda_k+\lambda_\ell)).
  \end{split}
\end{equation}
Next, we use the signature property to derive the following:
\begin{align*}
  \lefteqn{\ip{n}{\v_k}\ip{\v_\ell}{m}+\ip{n}{\v_\ell}\ip{\v_k}{m}}\\
    &=s_k\ip{m}{\v_k}\ip{\v_\ell}{m}+s_\ell\ip{m}{\v_\ell}\ip{\v_k}{m}\\
    &=s_k(\ip{m}{\v_k}\ip{\v_\ell}{m}+s_\ell s_k\ip{m}{\v_\ell}\ip{\v_k}{m})\\
    &=s_k(\ip{m}{\v_k}\ip{\v_\ell}{m}-\ip{m}{\v_\ell}\ip{\v_k}{m})\\
    &=0.
\end{align*}
Therefore, the right-hand side of Eq.~\eqref{e:sensitivity_final} vanishes and the sensitivity vanishes.

\begin{theorem}\label{t:clash}
Consider a spin ring in its single excitation subspace with biases ${\bf D}=(D_1,\ldots,D_N)$ which differentiably maximize the fidelity. At optimality the sensitivity of the fidelity relative to any real, symmetric perturbation $S_\mu$ vanishes.
\end{theorem}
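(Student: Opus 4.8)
The plan is to show that the first-order sensitivity formula already derived, combined with the superoptimality conditions of Theorem~\ref{t:necessary_and_sufficient} and the signature property~\eqref{e:signature}, forces the right-hand side to cancel in symmetric pairs. First I would take Eq.~\eqref{e:new_sensitivity} and set $\delta_\mu=0$, so that $\Hp$, $\Pip_k$, $\lambdap_k$ revert to the unperturbed $H_\vec{D}$, $\Pi_k$, $\lambda_k$. Since the biases maximize the fidelity, i.e.\ attain the value~$1$ (superoptimality), Theorem~\ref{t:necessary_and_sufficient} supplies the structural facts used throughout: each $T\lambda_k-\phi$ is an integer multiple of $\pi$, of parity $s_k=\Sgn(\bra{n}\Pi_k\ket{m})$, and $\sum_k|\bra{n}\Pi_k\ket{m}|=1$; minimization of the tracking error~\eqref{eq:err} additionally gives $|(\v_k)_n|=|(\v_k)_m|$, hence the signed relation $(\v_k)_n=s_k(\v_k)_m$ of~\eqref{e:signature}.

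With these in hand I would prune the triple sum of Eq.~\eqref{e:new_sensitivity} in three steps. (a) The diagonal $k=\ell$ drops, because its sine argument $\tfrac12 T(\lambda_k+\lambda_\ell-2\lambda_j)$ collapses to $T(\lambda_k-\lambda_j)$, a multiple of $\pi$. (b) Writing $\sin\!\big(\tfrac12 T(\lambda_k+\lambda_\ell-2\lambda_j)\big)=s_j\sin\!\big(\tfrac12 T(\lambda_k+\lambda_\ell)\big)$ pulls all $j$-dependence into the scalar $s_j$, so the inner sum collapses, via the signature relation, to $\sum_j s_j\bra{m}\Pi_j\ket{n}=\sum_j\bra{m}\Pi_j\ket{m}=1$. (c) Pairs with $T(\lambda_k-\lambda_\ell)$ an even multiple of $\pi$ drop because $\sinc(\tfrac12 T\omegap_{k\ell})=0$. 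What survives is Eq.~\eqref{e:latest_sensitivity}, a single sum over the index pairs with $s_ks_\ell=-1$.

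The closing move is a symmetrization. Writing $\Pi_k=\ket{\v_k}\bra{\v_k}$ with a real eigenframe, the kernel $\sinc(\tfrac12 T(\lambda_k-\lambda_\ell))\sin(\tfrac12 T(\lambda_k+\lambda_\ell))$ is even under $k\leftrightarrow\ell$, and $\qf{\v_k}{S_\mu}{\v_\ell}=\qf{\v_\ell}{S_\mu}{\v_k}$ since $S_\mu$ is real symmetric and the $\v_k$ real; so adding Eq.~\eqref{e:sensitivity_relabeled} to its $k\leftrightarrow\ell$ transpose yields twice the derivative while symmetrizing the one remaining asymmetric factor into $\ip{n}{\v_k}\ip{\v_\ell}{m}+\ip{n}{\v_\ell}\ip{\v_k}{m}$, as in Eq.~\eqref{e:sensitivity_final}. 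Substituting $(\v_k)_n=s_k(\v_k)_m$ turns this combination into $(s_k+s_\ell)\ip{m}{\v_k}\ip{m}{\v_\ell}$, which vanishes identically on the very set $s_ks_\ell=-1$ over which the sum runs; hence the sensitivity is zero. The diagonal perturbation $S_\mu=\ket{\mu}\bra{\mu}$ is also covered, trivially, by first-order stationarity of the fidelity along the control direction $\ket{\mu}\bra{\mu}$.

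I do not expect a deep obstacle here: the substantive work was spent establishing Theorem~\ref{t:necessary_and_sufficient} and the signature symmetry, and what remains is careful bookkeeping. The two points to watch are (i) using the hypothesis ``differentiably maximizes the fidelity'' in the form ``the fidelity attains its maximum value~$1$'', which is what licenses both Theorem~\ref{t:necessary_and_sufficient} and the component relations extracted from~\eqref{eq:err}; and (ii) arranging the symmetrization so that it exposes the vanishing \emph{antisymmetric} combination $\ip{m}{\v_k}\ip{\v_\ell}{m}-\ip{m}{\v_\ell}\ip{\v_k}{m}=0$ rather than forfeiting the cancellation to a sign slip. It is also worth noting that only the $m$- and $n$-components of the eigenvectors ever enter, so Eq.~\eqref{e:signature} alone suffices and the full ring signature theorem of Section~\ref{s:full_signature} is not needed; the ``ring'' in the statement matters chiefly because that is the setting in which superoptimal $\widehat{D}$-controllers are known to exist.
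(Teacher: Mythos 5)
Your proposal is correct and follows essentially the same route as the paper: set $\delta_\mu=0$ in Eq.~\eqref{e:new_sensitivity}, use the superoptimality conditions of Theorem~\ref{t:necessary_and_sufficient} to kill the diagonal $k=\ell$ terms, collapse the $j$-sum to $1$ via $s_j$, restrict to $s_ks_\ell=-1$ via the $\sinc$ factor, and then symmetrize in $k\leftrightarrow\ell$ so that the signature relation $(\v_k)_n=s_k(\v_k)_m$ annihilates the remaining sum (your $(s_k+s_\ell)\ip{m}{\v_k}\ip{m}{\v_\ell}=0$ is the same cancellation the paper writes as an antisymmetric difference). Your side remarks --- that the diagonal case $S_\mu=\ket{\mu}\bra{\mu}$ follows trivially from first-order stationarity, and that only the $m$- and $n$-components of the eigenvectors are needed so the full ring signature theorem is not required --- also match the paper's own observations.
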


\section{\label{sec:optimization}Optimization in a Challenging Landscape}

We present numerical optimization results for instantaneous and average information transfer and localization as well as the corresponding sensitivities of the controllers versus their performance. Initial results on controlling instantaneous information flow in spin networks are available in~\cite{time_optimal}, which are summarized and expanded here. The results here are computed for uniform rings of $N$ spins with XX couplings. All results for $N$ from $3$ to $20$ are available in a separate data set~\cite{data_set}.

Solving the optimization problem in Eq.~\eqref{eq:opt0} directly for a fixed target time $T$ is challenging even without constraints on the biases $\vec{D}$ as the landscape is extremely complicated with many local extrema, resulting in trapping of local optimization approaches such as quasi-Newton methods. Fig.~\ref{fig:ring9-errors} shows the results of various runs for fixed times for a ring of $9$ spins, with the objective being to propagate the excitation from spin $1$ to spin $2$ and $3$, respectively. While good solutions are found for certain times, including the minimum times given by the quantum speed limits (see Section~\ref{s:speed_limit}), the optimization clearly gets trapped frequently, making finding good solutions very expensive.

Instead of fixing the transfer time we add the time as additional parameter to optimize over. The results in Section~\ref{sec:optimality} show that the structure of the eigenvalues and eigenvectors must fulfill a specific condition to be able to maximize the transfer fidelity. While there are many potential structures that fulfill the condition, we can choose a specific one to provide a guide for good initial values and a restricted domain for the search. The idea is to quench the ring of $N$ spins into a chain from the initial spin to the target spin. Previous work showed that this can be easily achieved by applying a very strong potential in the middle between initial and target spin~\cite{rings_QINP}. If we can control the potentials of all spins then we can generalize this to quench the ring just before the initial and after the target spin, giving two options for a chain connecting the two nodes where either could provide a solution.

Furthermore, applying mirror symmetric potentials across the axis through the middle between initial and target state in the ring gives rise to an eigenstructure satisfying the optimality conditions. Consequently, we choose such symmetric potentials in combination with the approximate times where the maximum fidelity is achieved in the related chains as initial values for the optimization. This significantly improves the efficiency of finding controls for maximum information transfer in minimum time, as already observed in~\cite{time_optimal}. The symmetry constraint can be easily applied to the optimization by reducing the number of biases to be found to $\lceil N/2\rceil$, symmetric across the symmetry axis between initial and target spin. Convergence of the optimization can be further improved by selecting constants, peaks or troughs as biases between initial and target spin on both sides of the rings randomly as initial values, and selecting initial times from the transition times required for a spin chain of length corresponding to the distance between $\ket{\In}$ and $\ket{\Out}$.

\begin{figure}
  \centering\includegraphics[width=\columnwidth,height=.3\textheight]{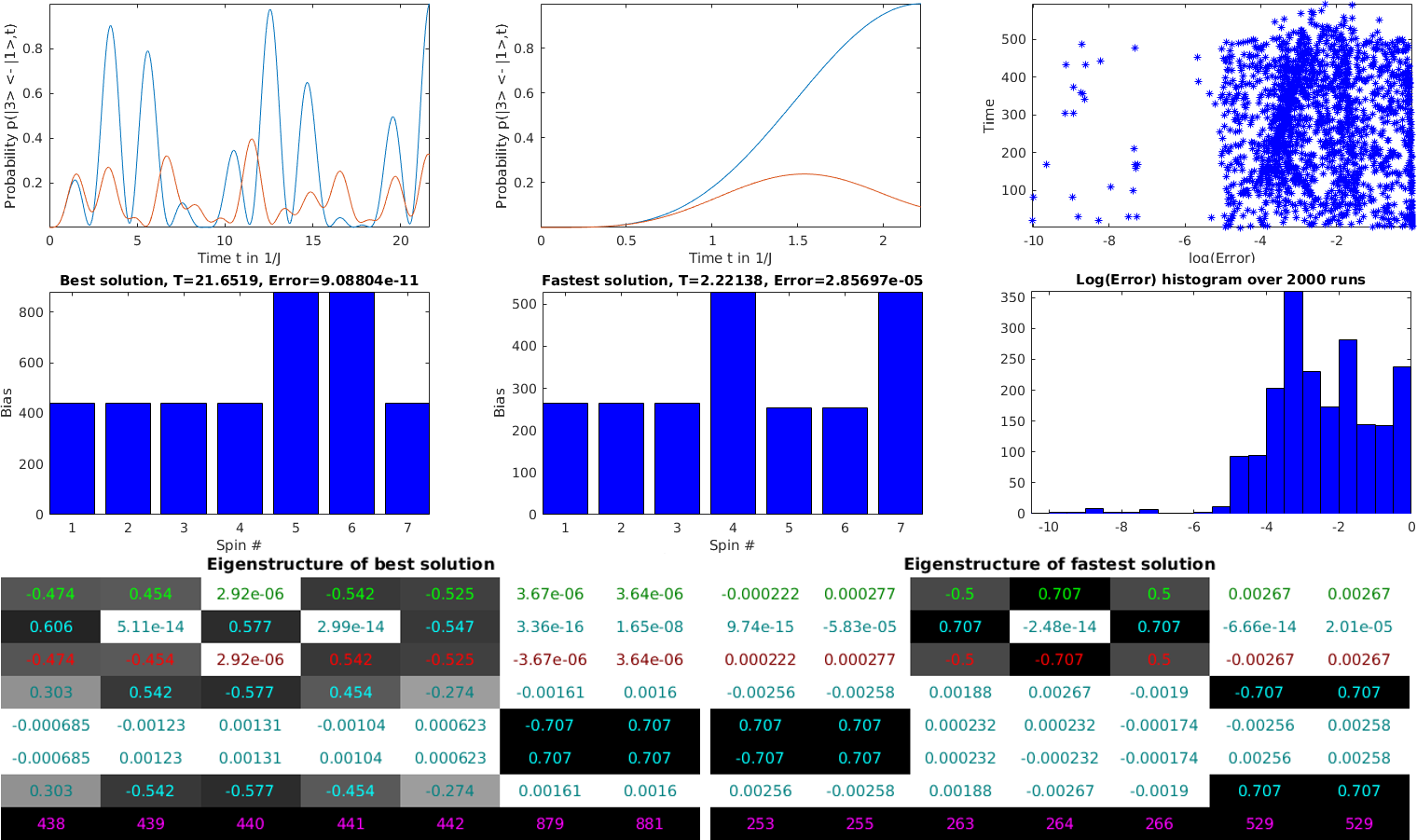}
  \caption{Optimization results for the information transfer probability from spin $1$ to $3$ for an XX-ring of $7$ spins over  spatial biases and time. The left column shows the biases and evolution (in blue vs. the natural evolution in red) giving the best fidelity at time $T \approx 21.65$ with an error of $9.09\times10^{-11}$. The middle column shows the fastest solution found with a fidelity greater than $0.999$ at $T \approx 2.22$. The right column shows the overall solutions found by repeated optimization, plotting time vs logarithm of the infidelity and a histogram of the logarithm of the infidelity. The bottom row shows the eigenstructure of the best and the fastest solution and their symmetries, with eigenvectors being the columns of the matrices (in cyan; green and red rows indicate $\ket{\In}$ and $\ket{\Out}$ states resp.) and corresponding eigenvalues at the bottom (in purple).}\label{fig:bias_t-7-3}
\end{figure}

Figs.~\ref{fig:bias_t-7-3}, \ref{fig:bias_t-11-3} show the optimization results for a ring of size $7$ and $11$ for the transition from spin $1$ to $3$. We report the solution with the highest fidelity and the fastest solution with a fidelity larger than $0.999$. Typically the highest fidelity solutions are found at longer times, but good solutions for short times are also achieved. However, many restarts of the optimization are required, and many runs fail with fidelities smaller than $0.9$. Observe the eigenstructure symmetries for the solutions consistent with Section~\ref{sec:optimality}. Shortest time solutions are found, while the best solution is at a different time.

We also report results for optimizing the average transfer fidelity, Eq.~\eqref{eq:opt1}: see Fig.~\ref{fig:bias_dt-11-6} for a $11$-ring for the transition from spin $1$ to $6$ and Fig.~\ref{fig:bias_dt-13-3} for a $13$ ring from spin $1$ to $3$. We show the solution with the highest fidelity and the fastest solution with a fidelity larger than $0.99$, lower than in the instantaneous case as the average fidelities are smaller as well.

Figs.~\ref{fig:fastest_t} and~\ref{fig:fastest_dt} show the shortest times achieved for instantaneous fidelities greater than $0.999$ and average fidelities greater than $0.99$ for rings of size $N=3$ to $20$ in summary. Due to the symmetry in the connections only transitions from $\ket{1}$ to $\ket{\lceil N\rceil/2}$ are reported. For target spins $\ket{2}$ and $\ket{3}$, the fastest times are generally consistent with the speed limits in Section~\ref{s:speed_limit}, but the shortest times could not always be achieved. All individual results can be accessed in a separately data set~\cite{data_set}. The cases where no minimum time solution satisfying the minimum fidelity requirements was found further show the difficulty of finding good controllers. Improved optimization strategies will be explored in future work.

Optimizing the average information transfer fidelity per Eq.~\eqref{e:fidelity_ave} can also be used to localize the excitation at a particular spin by maximizing $\bar{p}(\ket{\In}\shortleftarrow\ket{\In},\tfrac{1}{2}T;\tfrac{1}{2} T)$ as noted in Section~\ref{s:control_tasks}. Numerical results for rings of size $14$ and $19$ are shown in Figs.~\ref{fig:bias_dt-14-1} and~\ref{fig:bias_dt-19-1} for a holding time of $T=1,000$.

Theorem~\ref{t:clash} indicates that at superoptimality the sensitivity vanishes, which is further explored numerically here. Results for instantaneous transfer, time-window average transfer shown in Figs.~\ref{fig:sensitivity_t}, \ref{fig:sensitivity_dt}, \ref{fig:sensitivity_dt1} indicate a positive correlation between the sensitivity of the controllers and the infidelity. The specific sensitivity measure used here is the norm of the vector of sensitivities w.r.t. uncertainties in the $J_{k\ell}$ couplings. Among $2,000$ controllers indexed in decreasing order of fidelity, where we only show those with fidelities greater than $0.1$, the very best controllers nearly achieving the upper bound on the fidelity (achieving near vanishing tracking error) have nearly vanishing sensitivity; furthermore, with the deterioration of the fidelity the sensitivity increases.

\begin{figure}
  \centering\includegraphics[width=\columnwidth,height=.2\textheight]{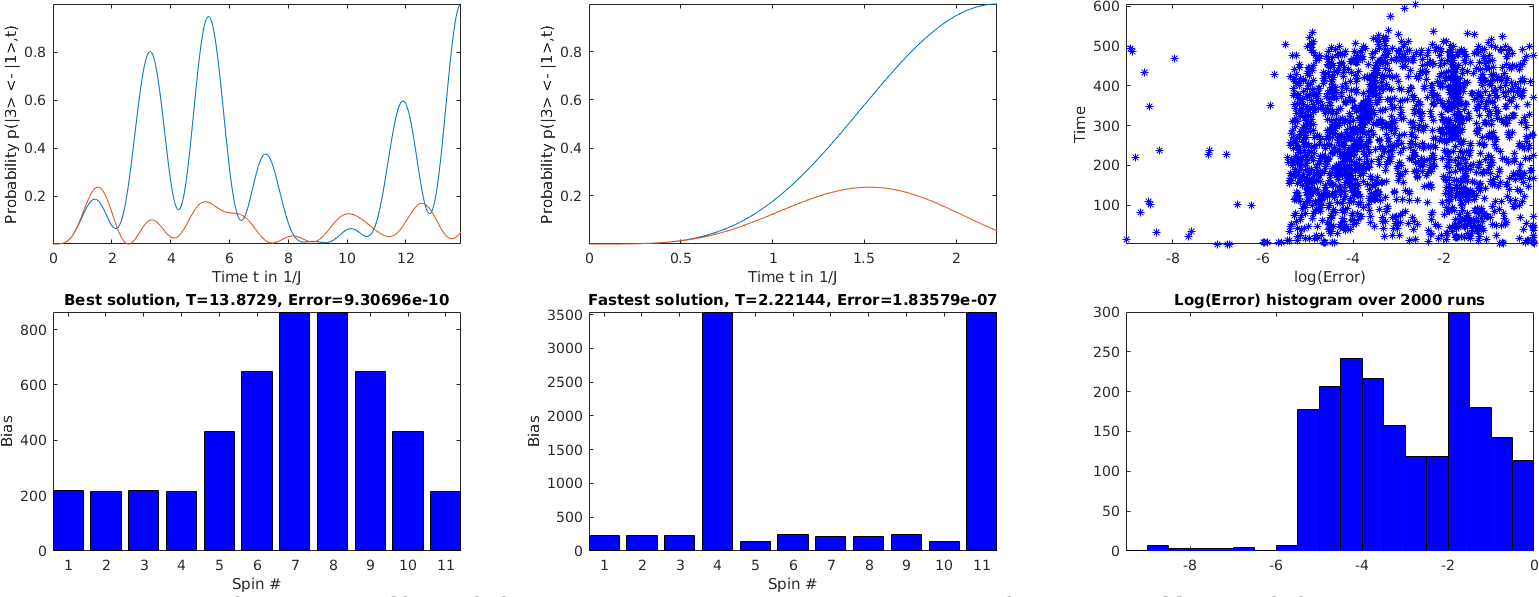}
  \caption{Results for optimizing the information transfer probability from spin $1$ to $3$ for a ring of $11$ spins similar to Fig.~\ref{fig:bias_t-7-3} (without eigenstructures).}
\label{fig:bias_t-11-3}
\end{figure}

\section{\label{sec:robust}Classical versus Quantum Robust Control}

Given a loop matrix $L$, a classical result is that the sensitivity $S=(I+L)^{-1}$ mapping from the reference to the tracking error and the logarithmic sensitivity of the sensitivity, $T=L(I+L)^{-1}$, 
derived from $S^{-1}(dS)=-(dL)L^{-1} T$, are in conflict since $S+T=I$. Horowitz~\cite[Chap. Six]{Horowitz} was probably first to point out that the limitation imposed by the SISO single degree-of-freedom configuration could be overcome by means of a two-degrees-of-freedom configuration. Ever since this fundamental observation, many MIMO two-degrees-of-freedom architectures have been proposed~\cite{2_deg_freedom_controller,Robust_2DOF_MIMO}. The controller $-i\widehat{D}(\ket{n},\ket{m}) \ket{\Psi}$ is, in a certain sense, a two-degrees-of-freedom controller as it depends on \emph{both}, the current state $\ket{\Psi(t)}$ and the target state $\ket{n}$, and does not explicitly depend on the tracking error. However, as already alluded to in Sec.~\ref{s:speed_limit}, the information transfer controller does not have a tracking error in the classical sense, but a projective tracking error.

\begin{figure}
  \centering\includegraphics[width=\columnwidth,height=.2\textheight]{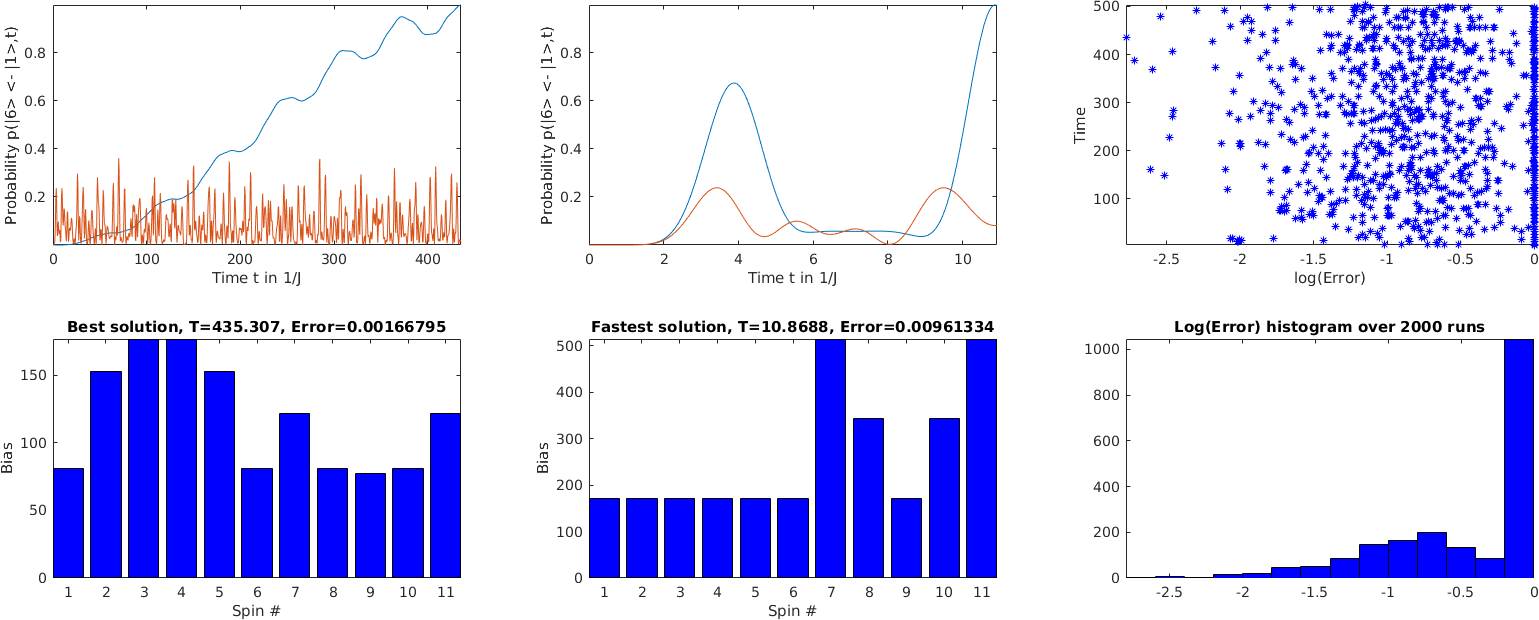}
  \caption{Results for optimizing the average information transfer probability from spin $1$ to $6$ for a ring of $11$ spins with $\delta T=0.05$ similar to Fig.~\ref{fig:bias_t-7-3} with the only difference that the fastest solutions with a fidelity of greater than $0.99$, due to the averaging, has been selected.}\label{fig:bias_dt-11-6}
\end{figure}

\begin{figure}
  \centering\includegraphics[width=\columnwidth,height=.2\textheight]{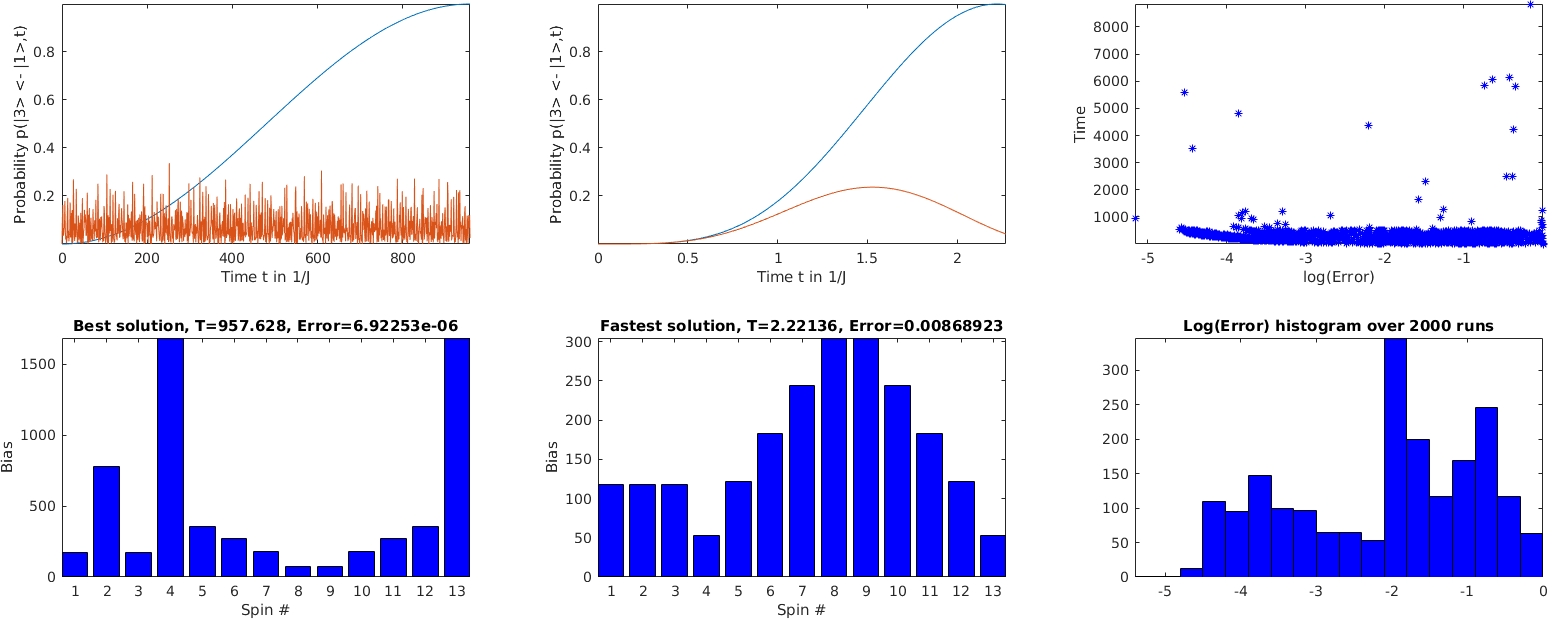}
  \caption{Results for optimizing the average information transfer probability from spin $1$ to $3$ for a ring of $13$ spins with $\delta T=0.05$ similar to Fig.~\ref{fig:bias_t-7-3} with the only difference that the fastest solutions with a fidelity of greater than $0.99$, due to the averaging, has been selected.}\label{fig:bias_dt-13-3}
\end{figure}

\begin{figure}
  \centering\includegraphics[width=\columnwidth,height=.25\textheight]{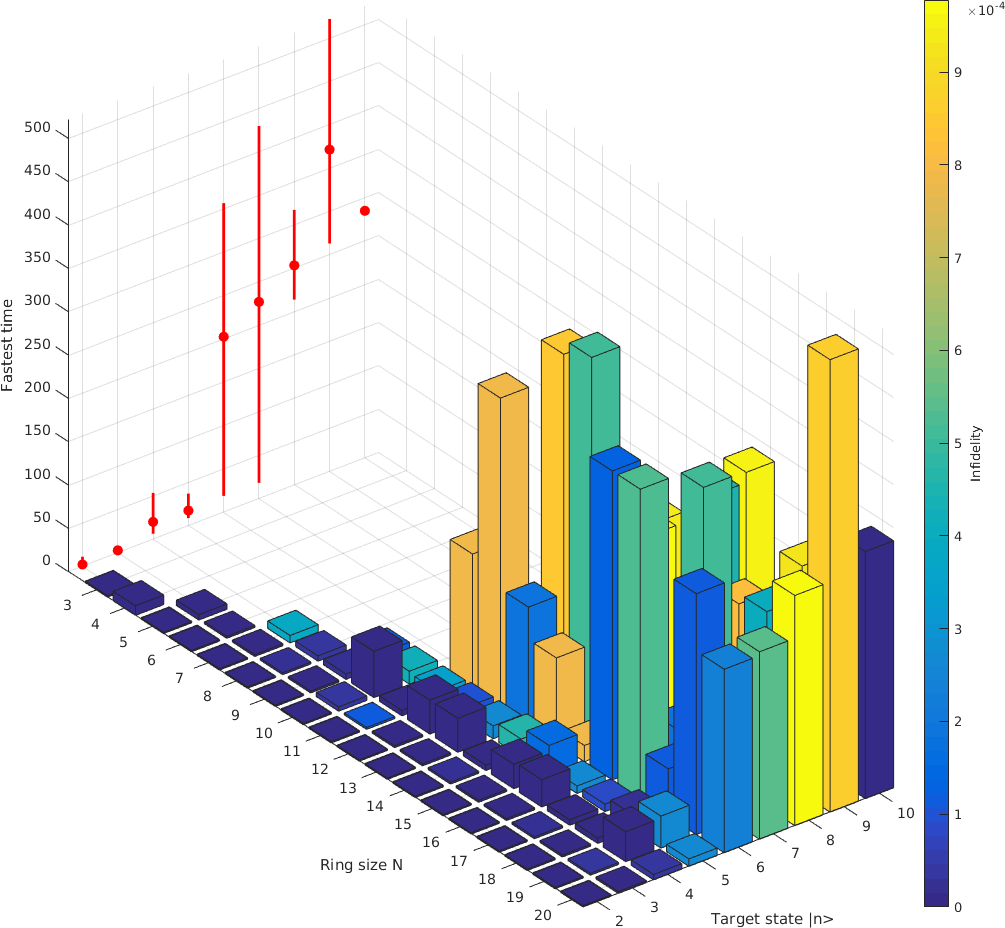}
  \caption{Shortest times achieved for instantaneous transition fidelities greater than $0.999$ for rings of size $N=3,\dots,20$ and transitions from $1$ to $k=2,\dotsc,\lceil N/2 \rceil$. Note that for the transitions for $N=13$ from $\ket{1}$ to $\ket{8}$, $N=16$ from $\ket{1}$ to $\ket{9}$ and $N=19$ from $\ket{1}$ to $\ket{10}$ no solution with fidelity greater than $0.999$ were found, so no fastest results are reported. The color of the bars indicate the infidelity of the fastest solution.}\label{fig:fastest_t}
\end{figure}

\begin{figure}
  \centering\includegraphics[width=\columnwidth,height=.25\textheight]{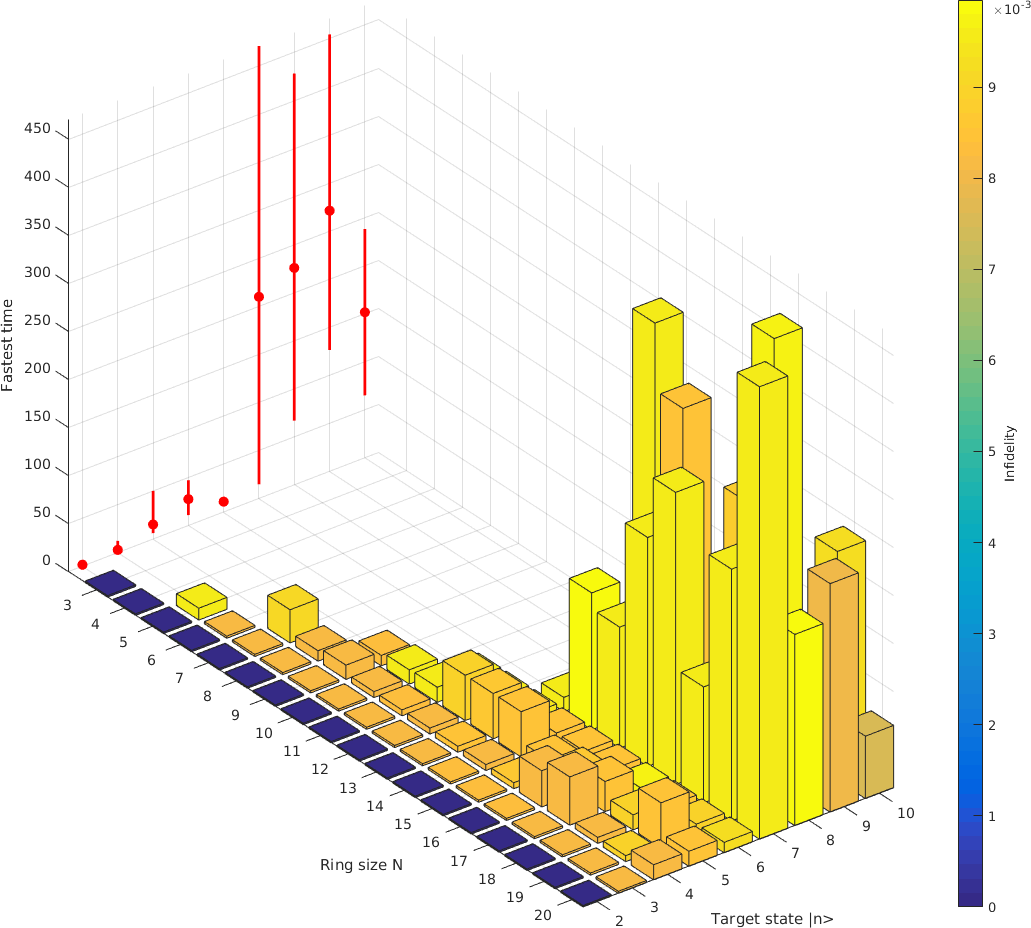}
  \caption{Shortest times achieved for average transition fidelities greater than $0.99$ for rings of size $N=3,\dots,20$ and transitions from $1$ to $k=2,\dotsc,\lceil N/2 \rceil$ for $\delta T = 0.05$. Note that for the transition for $N=19$ from $\ket{1}$ to $\ket{8}$ no solution with fidelity greater than $0.99$ were found, so no fastest results are reported. The color of the bars indicate the infidelity of the fastest solution.}\label{fig:fastest_dt}
\end{figure}

\begin{figure}
  \centering\includegraphics[width=\columnwidth,height=.2\textheight]{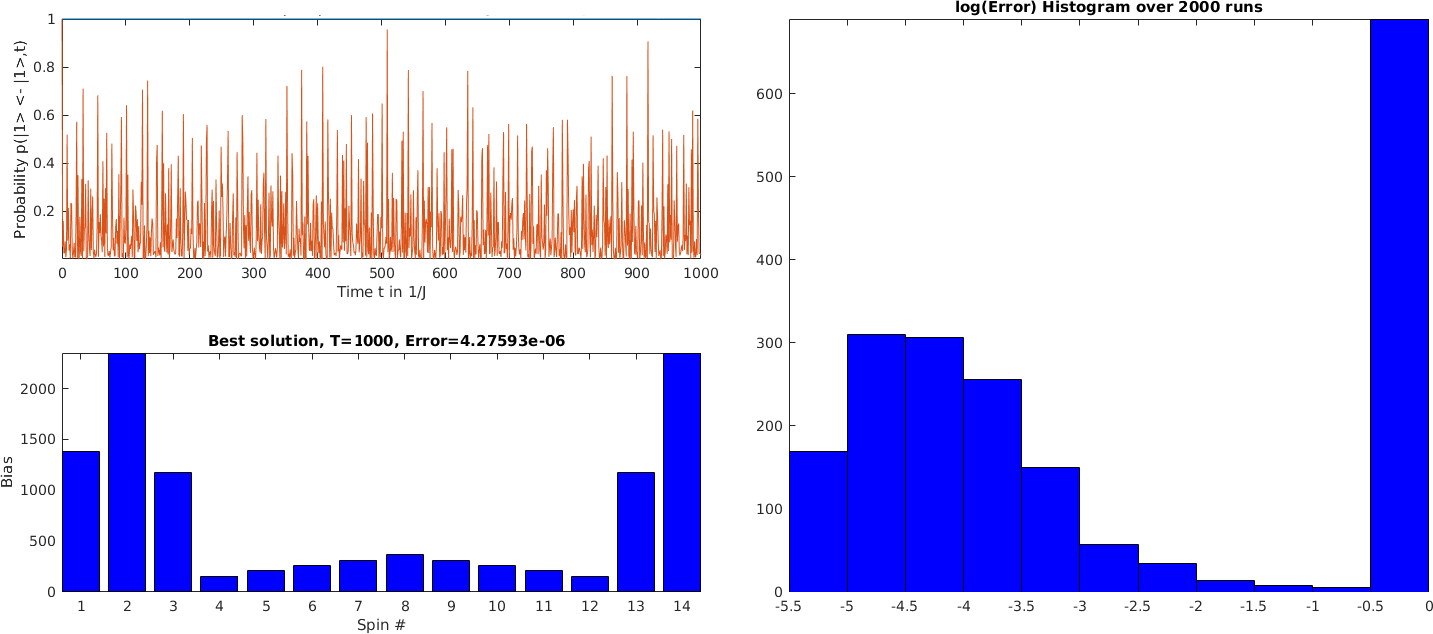}
  \caption{Optimization results for localizing spin $1$ in a $14$-ring over the spatial biases. The left column shows the biases and evolution (in blue vs. the natural evolution in red) giving the best fidelity for a localization time of $1,000$ with an error of $4.28\times10^{-6}$.}\label{fig:bias_dt-14-1}
\end{figure}

\begin{figure}
  \centering\includegraphics[width=\columnwidth,height=.2\textheight]{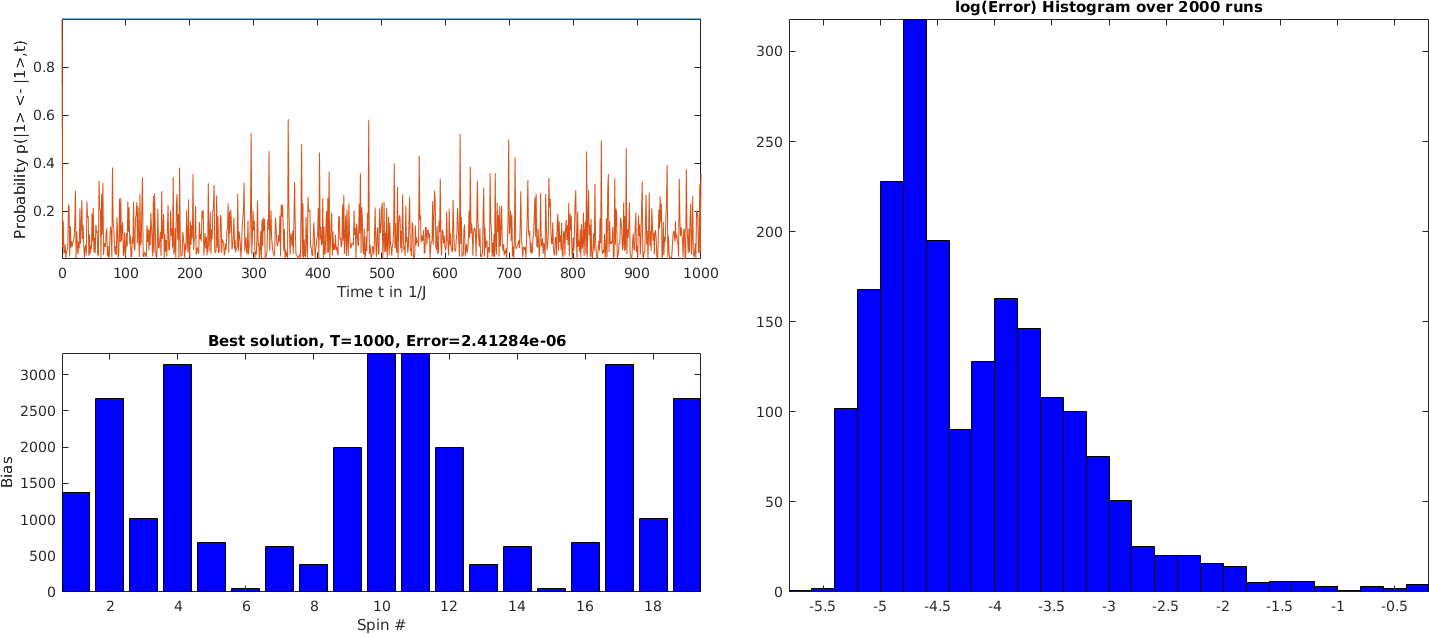}
  \caption{Optimization results for localizing spin $1$ in a $19$-ring similar to  Fig.~\ref{fig:bias_dt-14-1}.}\label{fig:bias_dt-19-1}
\end{figure}

To proceed towards classical Laplace domain control, consider the quantum mechanical projective tracking error
\[ E(t)=\ket{n}1(t)-e^{i \phi(t)}\ket{\Psi(t)} =\left(\ket{n}-e^{i \phi(t)}e^{-i (H_S+\widehat{D})t} \ket{m}\right)1(t), \]
where $1(t)$ denotes the unit step. The phase factor $e^{i \phi(t)}$ is a generalization of the phase factor of Section~\ref{s:speed_limit} securing
\begin{equation}\label{e:error_vs_overlap}
  \|E(t)\|^2=2-2|\ip{n }{\Psi(t)}|,
\end{equation}
that is, minimization of $\|E(t)\|$ is equivalent to maximization of $|\ip{n}{\Psi(t)}|$. It is easily seen that the phase factor to secure the above equality is $\phi(t)=-\phase{\ip{n}{\Psi(t)}}$. This creates an unconventional (adaptive) feedback from $\Psi$ to $\phi$. Instead of minimizing $\|E(t)\|^2$ or maximizing $|\ip{n}{\Psi(t)}|$ over $(t,D)$ at a specific time, we could optimize in a time-average sense, opening the road to Laplace transform techniques.

The Laplace transform of the error reads
\[ \mathcal{L}[E(t)](s)=\underbrace{\left( \frac{1}{s}I-\mathcal{L}\left[e^{i \phi(t)}\right]\ast \left(sI+i \left(H_S+\widehat{D}\right)\right)^{-1}P\right)}_{=:\mathcal{S}(s)}\ket{n}, \]
where $P$ is a permutation matrix such that $\ket{m}=P\ket{n}$ and $\ast$ denotes the Laplace domain convolution. Since $\mathcal{S}(s)$ is the mapping from the unit step reference to the error, it can be interpreted as a sensitivity matrix, but it differs significantly from the classical sensitivity matrix. In particular, selectivity implies that only the $n$th column of $\mathcal{S}(s)$ matters.

\begin{figure}
  \includegraphics[width=.49\columnwidth,height=.28\textheight]{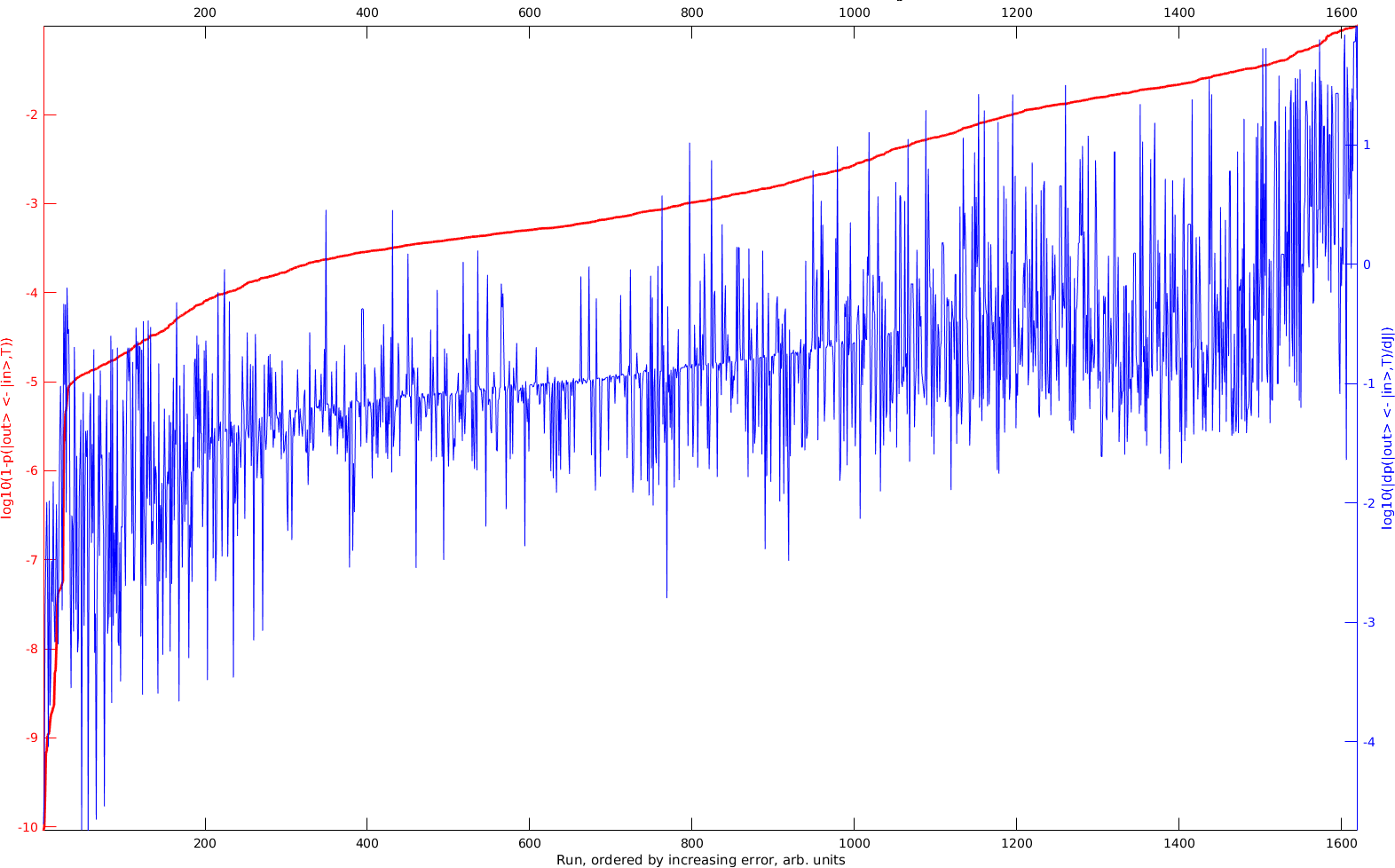}\hfill
  \includegraphics[width=.49\columnwidth,height=.28\textheight]{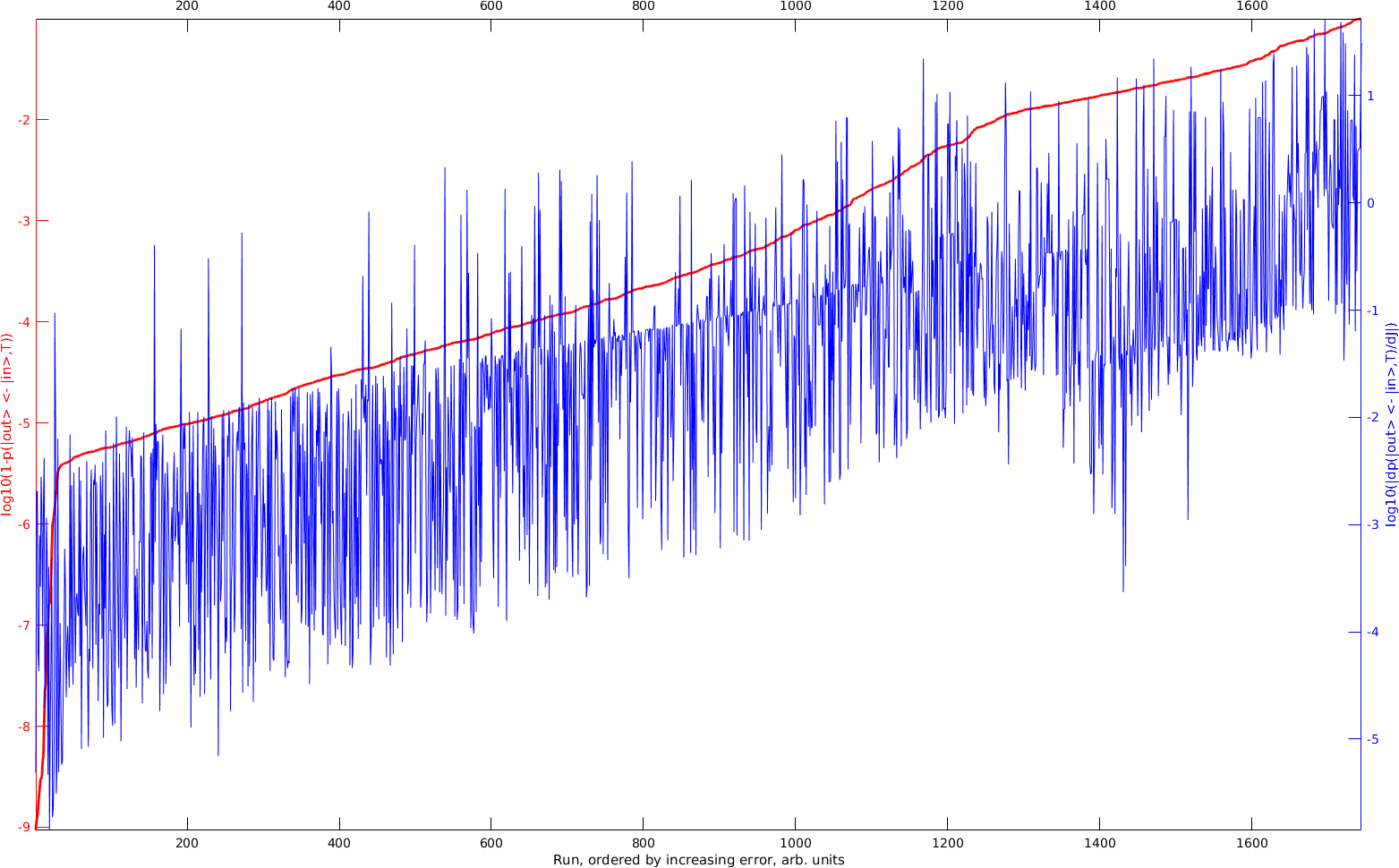}
  \caption{Logarithm of infidelity $1-p$ (red) and logarithm of sensitivity (blue), ordered by increasing infidelity from left to right, of the instantaneous $1 \to 3$ controllers of a $7$-ring (left) and a $11$-ring (right).}\label{fig:sensitivity_t}
\end{figure}

\begin{figure}
  \includegraphics[width=.49\columnwidth,height=.28\textheight]{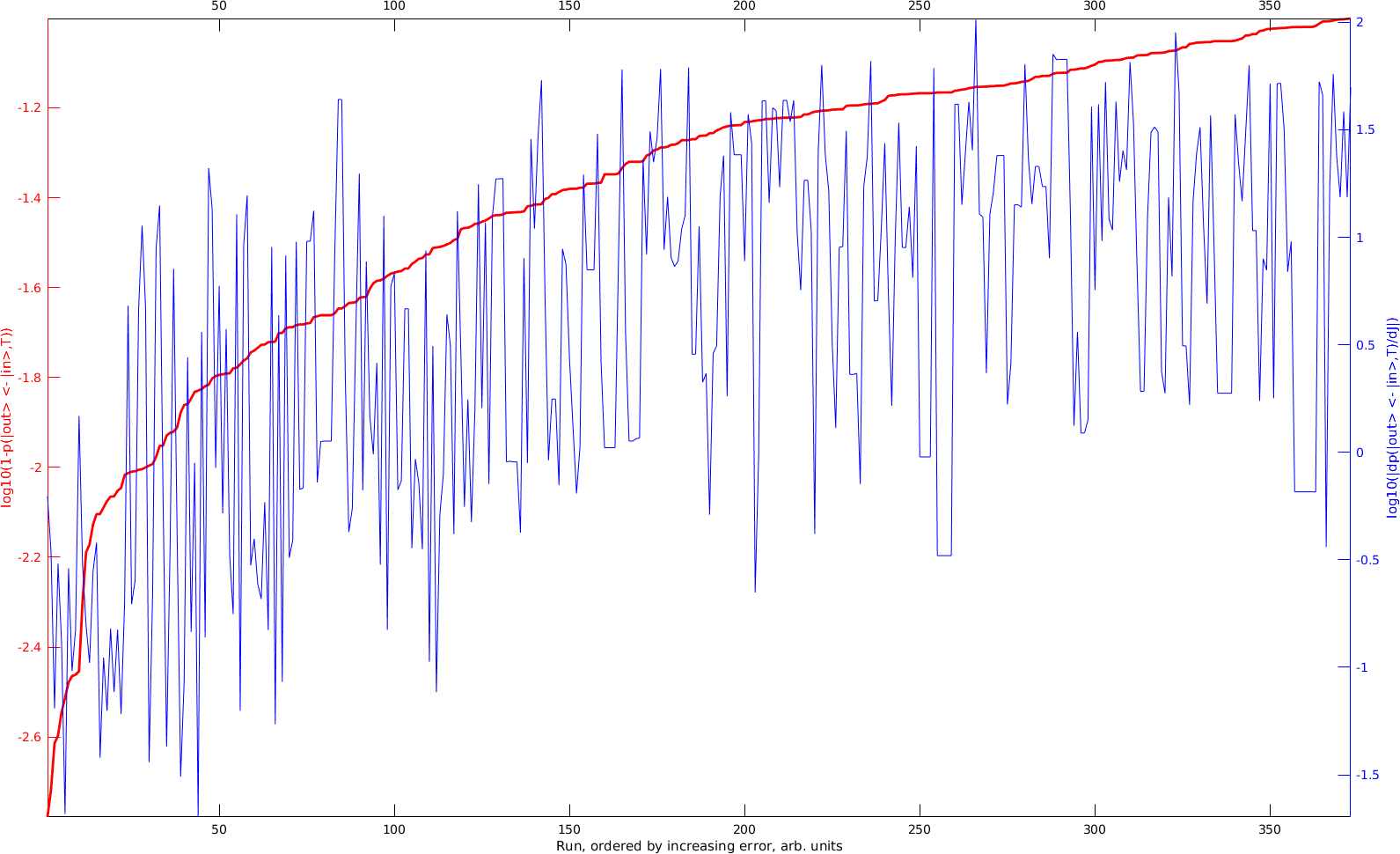}\hfill
  \includegraphics[width=.49\columnwidth,height=.28\textheight]{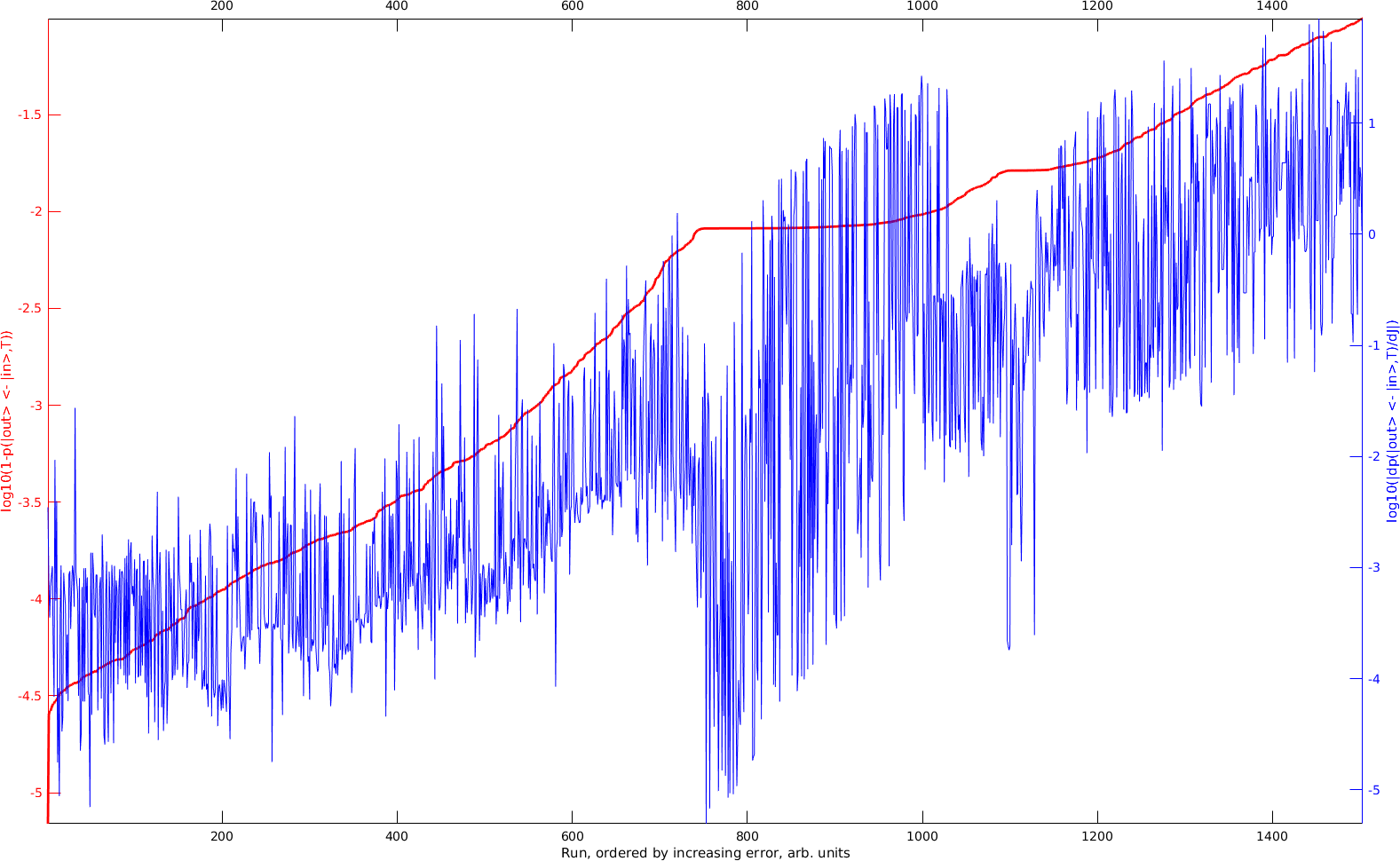}
  \caption{Logarithm of infidelity $1-\bar{p}$ (red) and logarithm of sensitivity (blue), ordered by increasing infidelity from left to right, of the average $1 \to 6$ controllers of a $11$-ring (left) and $1 \to 3$ controllers of an $11$-ring (right).}\label{fig:sensitivity_dt}
\end{figure}

\begin{figure}
  \includegraphics[width=.49\columnwidth,height=.28\textheight]{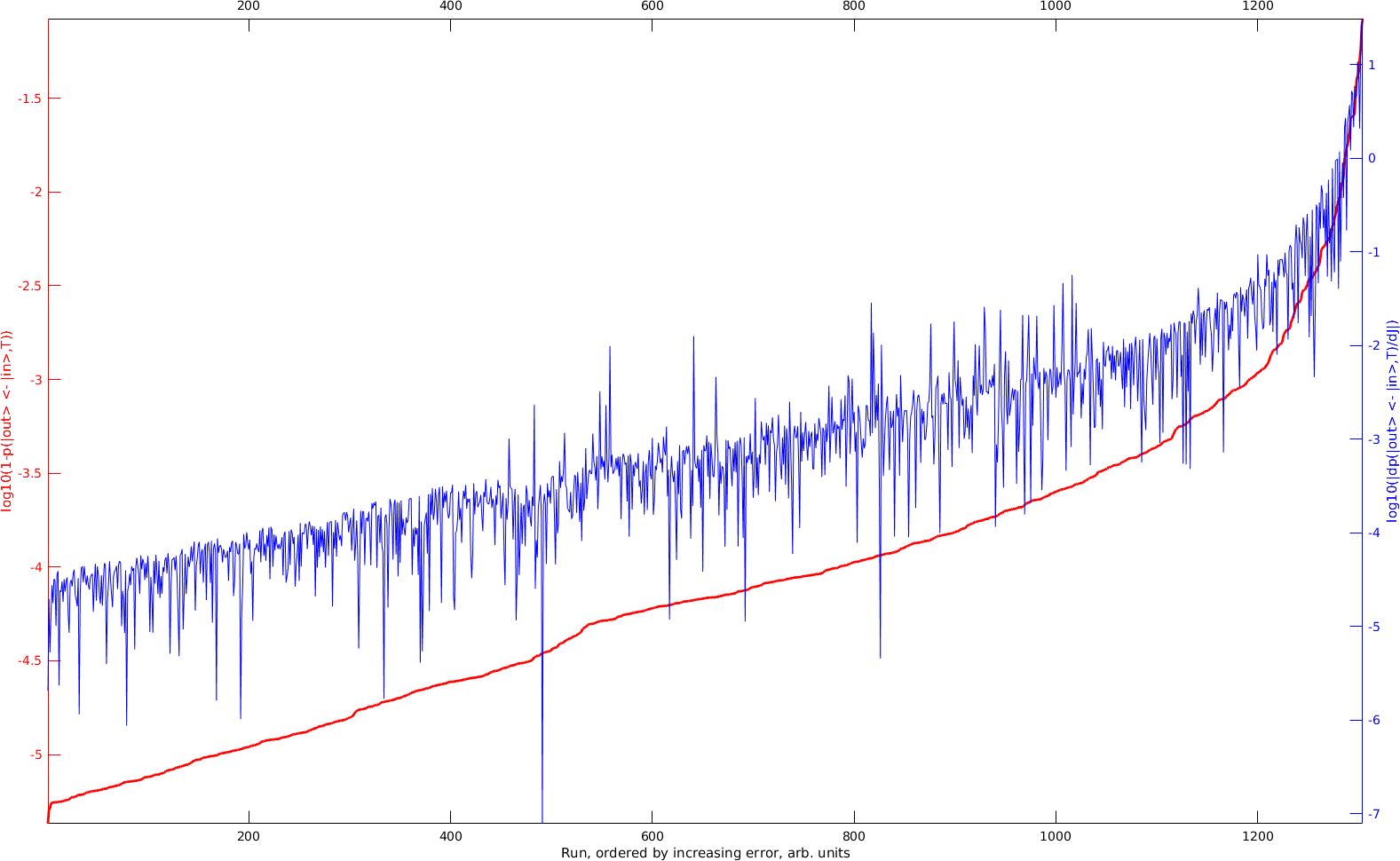}\hfill
  \includegraphics[width=.49\columnwidth,height=.28\textheight]{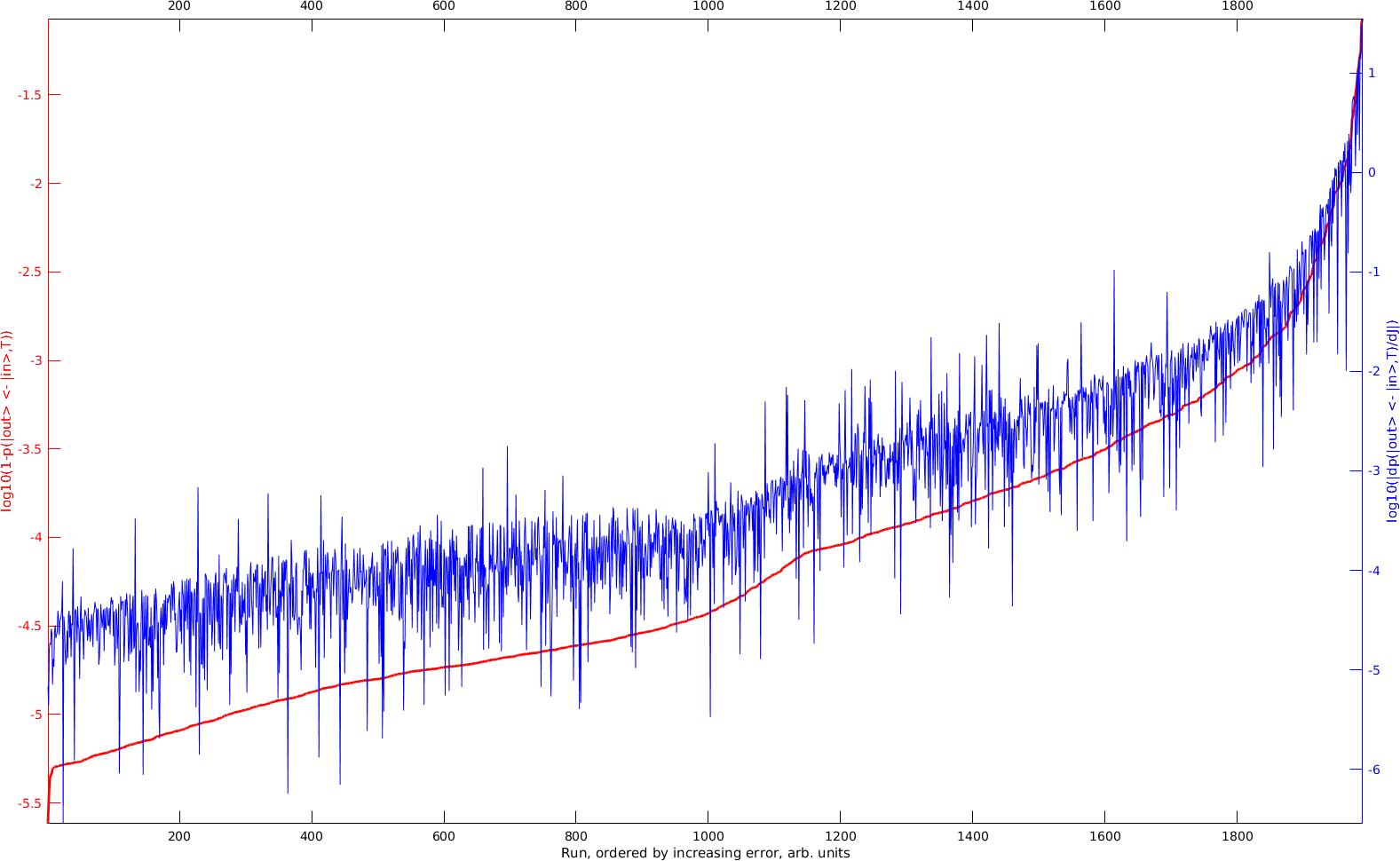}
  \caption{Logarithm of infidelity $1-\bar{p}$ (red) and logarithm of sensitivity (blue), ordered by increasing infidelity from left to right, of the localization controllers of a $14$-ring (left) and $19$-ring (right).}\label{fig:sensitivity_dt1}
\end{figure}

Since the system lacks asymptotic stability, the classical final value theorem does not hold. However,  generalized version of the Laplace final value theorem~\cite{formal_extended_final_value_Laplace,On_two_generalizations} allows us to compute the average error
\begin{align}
  &\lim_{T \to \infty} \frac{1}{T} \int_0^T E(t) \;dt = \lim_{s \to 0} s\mathcal{S}(s)\ket{n} \label{e:convolution}\\
  &= \left(I- \lim_{s \to 0} s\left(\mathcal{L}\left[e^{i \phi(t)}\right]\ast \left(sI+i \left(H_S+\widehat{D}\right)\right)^{-1}P\right)\right)\ket{n}. \notag
\end{align}

To get a better feeling for the Laplace convolution, assume, for simplicity, that perfect state transfer is achieved, i.e, $|\ip{n}{\Psi(t)}|=1$, in which case $e^{i\phi(t)}=\ip{n}{\Psi(t)}^*$. 
Then the convolution becomes
\begin{multline*}
  \mathcal{L}\left[e^{i\phi(t)}\right]\ast \left(sI+i\left(H_S+\widehat{D}\right)\right)^{-1}P\\
  = \bra{m} \left(sI-i\left(H_S+\widehat{D}\right)\right)^{-1} \ket{n} \ast \left(sI+i\left(H_S+\widehat{D}\right)\right)^{-1}P\\
  = \frac{1}{2\pi i} \int_{c-i\infty}^{c+i\infty} \bra{m} \left(zI-i\left(H_S+\widehat{D}\right)\right)^{-1} \ket{n} \\ \left((s-z)I+i\left(H_S+\widehat{D}\right)\right)^{-1}P\;dz,
\end{multline*}
where the path of integration is in the domain of convergence of $(sI+i(H_S+\widehat{D}))^{-1}$, i.e., $\operatorname{Re}(c)>0$. Therefore, $\lim_{s \to 0}$ should be interpreted as the limit as $s$ goes to zero, from the Right Half Plane.  The convolution is manageable via residue calculations, from which it follows that it has a pole at $s=0$, so that 
$\lim_{s \to 0} s \left(\mathcal{L}\left[e^{i\phi(t)}\right]\ast \left(sI+i\left(H_S+\widehat{D} \right) \right)^{-1}\ket{m}\right)$ 
is not trivially zero. 
%It is important to note that the convolution has poles at $s=0$, so $\lim_{s\to 0}$ of $s$ times the convolution in~\eqref{e:convolution} is not trivially zero.

\section{\label{sec:conclusion}Conclusions and Future Work}

We have shown that information transfer and localization in spin networks can be controlled by shaping the energy landscape using static potentials. This can be interpreted in terms of feedback control. However, it differs from measurement-based quantum feedback control in that the feedback is model-based and fully coherent. An advantage of this type of control is the relative simplicity, as neither measurements and state estimation nor rapidly modulated dynamic controls are required.  Furthermore, optimal feedback controllers are also the most robust, with superoptimal controllers simultaneously achieving perfect state transfer and vanishing sensitivity with respect to unavoidable uncertainties in the system.

In addition to exact-time transfer, additional gains in robustness can be obtained by optimizing the transfer to maximize the fidelity over a time window. This is especially important for practical applications as instantaneous readout requires effectively infinite bandwidth, which is usually unavailable. If the input and output states are identical, extending the time window yields solutions that achieve Anderson localization~\cite{Anderson-58,short_to_Anderson}, the closest equivalent to asymptotic closed-loop stability for Hamiltonian quantum networks.  Sensitivity properties of time-windowed optimized controllers  are analyzed from the statistical point of  view of  concordance between error and sensitivity as shown in Fig.~\ref{fig:sensitivity_dt} in~\cite{statistical_control}. Robustness under larger, combined initial preparation error and coupling error is available in~\cite{ssv_mu}.

Compared to dynamic control, finding optimal feedback control laws is considerably harder due to the complex optimization landscape. Analysis of the optimality conditions shows that the eigenstructure of the dynamic generators must satisfy certain symmetry conditions. Enforcing these conditions and careful choice of the initial values significantly improve the success rate of local optimization algorithms. Thus, enforcing constraints in this case improves performance of local optimization algorithms by simplifying the optimization landscape. For spin rings in particular, the constraints make them more similar to chains and the timing of the transmission peaks in the corresponding chains give good indications for the shortest possible transfer times in ring-based quantum routers.

From a control point of view, maximizing the transfer fidelity $|\bra{\Out}U_\vec{D}(T)\ket{\In}|=1$ is equivalent to canceling  the tracking error $|\ket{\Out}-e^{i \phi} U_\vec{D}(T)\ket{\In}|$, but the global phase factor means that we must think of the tracking error as an element of the complex projective space $\mathbb{C}\mathbb{P}^{N-1}$. Another difference to its classical counterpart is that our quantum feedback control scheme is not only $\ket{\Out}$-selective, but $\ket{\In}$-selective as well, while classical controllers are $\ket{\Out}$-selective as the target is specified by the reference signal, but the initial state is an equilibrium state.

There are many open questions for this control paradigm, ranging from the optimization landscape to global optimization algorithms that utilize the specific structure of the problem to find the best control laws. Furthermore, unlike dynamic control, for which explicit conditions for controllability in terms of the Lie algebra of the control operators are known, there are many theoretical questions in terms of attainability of the bounds and speed limits for selective information transfer.

% TODO:
\addtolength{\textheight}{-3.3cm}
% This command serves to balance the column lengths
% on the last page of the document manually. It shortens
% the textheight of the last page by a suitable amount.
% This command does not take effect until the next page
% so it should come on the page before the last. Make
% sure that you do not shorten the textheight too much.

\section*{Acknowledgments}
We would like to thank Robert Kosut and Aled Isaac for insightful discussions.

\appendices

\section{\label{app:signature-property}Proof of Full Signature Property for Rings with Symmetric Biases}

The argument of the proof is based on the symmetry of the biases in Eq.~\eqref{e:bias-symmetry}. To keep the notation simple, we write the eigenequation as $(H-\lambda I)x=0$.  Due to circular nearest-neighbor coupling for rings, it reads
\begin{equation*}
  x_{\ell-1 \bmod N}+(D_\ell-\lambda)x_{\ell \bmod N}+x_{\ell+1 \bmod N}=0, \quad \forall \ell.
\end{equation*}
We shall sometimes drop $\bmod N$ to simplify the notation. The key point is to rewrite the components of the eigenequation in symmetric pairs:
\begin{equation}
\label{e:pair}
\begin{split}
x_{m+\ell-1}+(D_{m+\ell}-\lambda)x_{m+\ell}+x_{m+\ell+1}&=0,\\
x_{n-\ell-1}+(D_{n-\ell}-\lambda)x_{n-\ell}+x_{n-\ell+1}&=0.
\end{split}
\end{equation}
Adding the equations and using the symmetry of the biases yields
\begin{equation}
\label{e:combined}
\begin{split}
(x_{m+\ell-1}+x_{n-\ell+1})+(D_{m+\ell}-\lambda)&(x_{m+\ell}+x_{n-\ell})\\
+&(x_{m+\ell+1}+x_{n-\ell-1})=0 .
\end{split}
\end{equation}
We must show that the sums of pairs of symmetrically related components vanish. This is achieved by writing the Eqs.~\eqref{e:pair}, \eqref{e:combined} for all $\ell$'s together with the ``boundary conditions'' allowing the equation to be solved by backsubstitution. By ``boundary conditions,'' we mean Eq.~\eqref{e:combined} involving $x_m+x_n=0$ together with Eq.~\eqref{e:pair} for some $\ell$ such that $\{m+\ell-1,m+\ell,m+\ell+1\} \cap \{n-\ell-1,n-\ell,n-\ell+1\} \ne \emptyset$.

How the two sets intersect and how to set up the corresponding boundary conditions depend on whether $|m-n|$ is even or odd. By symmetry we assume $m<n$.

\noindent\textbf{Case 1: If $n-m$ is odd,} then the recursion on the pairs of Eqs.~\eqref{e:pair} terminates at $\ell $ with $(n-\ell)=(m+\ell)+1 \bmod N$, and
\[ \overline{\ell}=\frac{n-m-1 \bmod N}{2}. \]
Observe that if $N$ is even the $\bmod N$ freedom yields two such $\ell$'s, defining two ring edges $(m+\overline{\ell})(n-\overline{\ell})$ in antipodal opposition.  To simplify the notation, define $\overline{m}:=m+\overline{\ell}$, $\overline{n}:=n-\overline{\ell}$, and let $X_{\overline{m}-\ell}$ to be the sum of the $\overline{m}-\ell$ component and its twin $\sigma(\overline{m}-\ell)=\overline{n}+\ell$, viz, $X_{\overline{m}-\ell}=x_{\overline{m}-\ell}+x_{\overline{n}+\ell}$. Writing the pair of Eqs.~\eqref{e:pair} for $\ell=\overline{\ell}$ and adding them in the combined Eq.~\eqref{e:combined} yields
\[ X_{\overline{m}-1}+(D_{\overline{m}}-\lambda +1)X_{\overline{m}}=0. \]
Defining the polynomial $p_{\overline{m}-1}(\lambda):=-(D_{\overline{m}}-\lambda +1)$ yields
\[ X_{\overline{m}-1}=p_{\overline{m}-1}(\lambda)X_{\overline{m}}. \]
Next, writing Eqs.~\eqref{e:pair}, \eqref{e:combined} for $\ell=\overline{\ell}-1$ yields
\[ X_{\overline{m}-2}+(D_{\overline{m}-1}-\lambda)X_{\overline{m}-1}+X_{\overline{m}}=0. \]
Writing $X_{\overline{m}-1}$ in its polynomial formulation yields $X_{\overline{m}-2}=p_{\overline{m}-2}(\lambda)X_{\overline{m}}$, where $p_{\overline{m}-2}(\lambda)=((D_{\overline{m}-1}-\lambda)(D_{\overline{m}}-\lambda+1)-1)$. The general equation should now be obvious:
\[ X_{\overline{m}-\ell}=p_{\overline{m}-\ell}X_{\overline{m}} \]
and the recursion on the polynomials is
\[ p_{\overline{m}-\ell}=-(D_{\overline{m}-\ell+1}-\lambda)p_{\overline{m}-\ell+1}-1. \]
Finally, we reach the situation where $X_m=p_m(\lambda)X_{\overline{m}}$. Since $X_m=x_m+x_n=0$, and if $p_m(\lambda)\ne 0$, we get $X_{\overline{m}}=0$, from where by backsubstitution $X_{\overline{m}-\ell}=0$ and the full symmetry is proved.

\noindent\textbf{Case 2: If $n-m$ is even,} then as $\ell$ increases, Eqs.~\eqref{e:pair}, \eqref{e:combined} terminate at
\[ \widehat{\ell}=\frac{n-m \bmod N}{2}, \]
with $\widehat{m}:= m+\widehat{\ell}=n-\widehat{\ell}=:\widehat{n}$. Observe that if $N$ is even the $\bmod N$ freedom yields two such $\widehat{m}=\widehat{n}$ at anti-podal points in the ring. The beginning of the recursion is a bit different  from the one of the odd case. We start with
\[ x_{\widehat{m}-1}+(D_{\widehat{m}}-\lambda)x_{\widehat{m}}+x_{\widehat{n}+1} =0 \]
and rewrite it as $X_{\widehat{m}-1}+(D_{\widehat{m}}-\lambda)x_{\widehat{m}}=0$. Defining the polynomial $p_{\widehat{m}-1}=-(D_{\widehat{m}}-\lambda)$ yields
\[ X_{\widehat{m}-1}=p_{\widehat{m}-1}(\lambda)x_{\widehat{m}}. \]
From here on the recursion is very much like the one of the odd case:
\[ X_{\widehat{m}-\ell}=p_{\widehat{m}-\ell}(\lambda)x_{\widehat{n}}, \]
together with the polynomial recursion
\[ p_{\widehat{m}-\ell}=(D_{\widehat{m}-\ell+1}-\lambda)p_{\widehat{m}-\ell+1}-2. \]
As $\ell$ increases, the recursion terminates as $X_m=p_m(\lambda)x_{\widehat{n}}$. Since $X_m=x_n+x_m=0$, and if $p_m(\lambda) \ne 0$, we get $x_{\widehat{n}}=0$ from where by backsubstitution $X_{\widehat{m}-\ell}=0$ and the theorem is proved.

\end{document}